\definecolor{corlinks}{RGB}{0,0,220}
\definecolor{cormenu}{RGB}{15,93,219}
\definecolor{corurl}{RGB}{15,93,219}
\newtheorem{theorem}{Theorem}
\newtheorem{question}[theorem]{Question}
\newtheorem{lemma}[theorem]{Lemma}
\newtheorem{corollary}[theorem]{Corollary}
\newtheorem{definition}[theorem]{Definition}
\newtheorem{remark}[theorem]{Remark}
\newtheorem{fact}[theorem]{Fact}
\newtheorem{claim}[theorem]{Claim}
\newtheorem{example}[theorem]{Example}
\def\colorful{1}
\newcommand{\prob}[2]{\mathop{\mathrm{Pr}}_{#1}[#2]}
\newcommand{\avg}[2]{\mathop{\textbf{E}}_{#1}[#2]}
\newcommand{\poly}{\mathop{\mathrm{poly}}}
\newcommand{\F}{\mathbb{F}}
\newcommand{\AC}{\mathsf{AC}}
\newcommand{\mc}[1]{\mathcal{#1}}
\newcommand{\pdeg}{\mathrm{pdeg}}
\newcommand{\Maj}{\mathrm{Maj}}
\newcommand{\OR}{\mathrm{OR}}
\newcommand{\AND}{\mathrm{AND}}
\newcommand{\vertexcirc}[5]{
	\fill[white] (#1,#2) circle (0.35);
	\draw[#5] (#1,#2) circle (0.35);
	\node at (#1,#2) {#4#3};
}
\newcommand{\vertexsq}[5]{
	\fill[white] (#1-0.2,#2-0.2)--(#1+0.2,#2-0.2)--(#1+0.2,#2+0.2)--(#1-0.2,#2+0.2)--(#1-0.2,#2-0.2);
	\draw[#5] (#1-0.2,#2-0.2)--(#1+0.2,#2-0.2)--(#1+0.2,#2+0.2)--(#1-0.2,#2+0.2)--(#1-0.2,#2-0.2);
	\node at (#1,#2) {#4#3};
}
\title{On the Probabilistic Degree of an $n$-variate Boolean Function}
\author{Srikanth Srinivasan\thanks{On leave from Department of Mathematics, IIT Bombay. Supported by startup grant from Aarhus University.}\\
	Aarhus University, Denmark.\\
	\texttt{srikanth@cs.au.dk} 
	\and 
	S. Venkitesh\thanks{Department of Mathematics, IIT Bombay.  Supported by the Senior Research Fellowship of the Human Resource Development Group, Council of Scientific and Industrial Research, Government of India.}\\
	IIT Bombay, Mumbai, India. \\
	\texttt{venkitesh.mail@gmail.com}}
\date{}
\begin{document}
	\maketitle
	
	\begin{abstract}
		Nisan and Szegedy (CC 1994) showed that any Boolean function $f:\{0,1\}^n\rightarrow \{0,1\}$ that depends on all its input variables, when represented as a real-valued multivariate polynomial $P(x_1,\ldots,x_n)$, has degree at least $\log n - O(\log \log n)$. This was improved to a tight $(\log n - O(1))$ bound by Chiarelli, Hatami and Saks (Combinatorica 2020). Similar statements are also known for other Boolean function complexity measures such as Sensitivity (Simon (FCT 1983)), Quantum query complexity, and Approximate degree (Ambainis and de Wolf (CC 2014)). 
		
		In this paper, we address this question for \emph{Probabilistic degree}. The function $f$ has probabilistic degree at most $d$ if there is a random real-valued polynomial of degree at most $d$ that agrees with $f$ at each input with high probability. Our understanding of this complexity measure is significantly weaker than those above: for instance, we do not even know the probabilistic degree of the OR function, the best-known bounds put it between $(\log n)^{1/2-o(1)}$ and $O(\log n)$ (Beigel, Reingold, Spielman (STOC 1991); Tarui (TCS 1993); Harsha, Srinivasan (RSA 2019)).
		
		Here we can give a near-optimal understanding of the probabilistic degree of $n$-variate functions $f$, \emph{modulo} our lack of understanding of the probabilistic degree of OR. We show that if the probabilistic degree of OR is $(\log n)^c$, then the minimum possible probabilistic degree of such an $f$ is at least $(\log n)^{c/(c+1)-o(1)}$, and we show this is tight up to $(\log n)^{o(1)}$ factors.
	\end{abstract}
	
	\section{Introduction}
	\label{sec:intro}
	
	\subsection{Background and motivation}
	\label{sec:background}
	
	Representing Boolean functions $f:\{0,1\}^n\rightarrow \{0,1\}$ by polynomials is a tried-and-tested technique that has found uses in many areas of Theoretical Computer Science. In particular, such representations have led to important results in Complexity theory~\cite{Beigel, Braverman}, Learning theory~\cite{KlivansServedio,CIKK}, and Algorithm Design~\cite{WilliamsAPSP}. 
	
	There are many different kinds of  polynomial representations that are useful in various applications. The most straightforward way to represent a Boolean function $f:\{0,1\}^n\rightarrow \{0,1\}$ by a polynomial is by finding a $P\in \mathbb{R}[x_1,\ldots,x_n]$\footnote{We can represent $f$ as a polynomial over any field, but in this paper, we will work over the reals.} such that $P(a) = f(a)$ for all $a \in \{0,1\}^n$. It is a standard fact (say by M\"{o}bius Inversion or polynomial interpolation) that any $f$ has such a representation\footnote{The representation is in fact \emph{unique} if we restrict $P$ to be \emph{multilinear}, i.e. that no variable has degree more than $1$ in $f$.} has degree at most $n$, and the smallest degree of such a $P$ is called the \emph{degree of $f$} (or sometimes the \emph{Fourier degree of $f$} because of its close relation to the Fourier spectrum of $f$~\cite{ODonnellbook}), and denoted $\deg(f).$
	
	The degree of $f$ is an important notion of \emph{complexity} of the function $f$ and is closely related to a slew of combinatorial measures of Boolean function complexity such as \emph{Sensitivity, Decision Tree complexity, Quantum Query complexity}, etc.   (see, e.g., the survey of Buhrman and de Wolf~\cite{BdW} for a nice introduction). Given a complexity measure $\mu(\cdot)$ (such as $\deg(\cdot)$) on Boolean functions, a natural question to ask is the following. 
	
	\begin{question}
		\label{qn:main}
		How small can $\mu(f)$ be for a function $f$ on $n$ variables?
	\end{question}
	
	To make this question interesting, one must exclude trivial functions like the constant functions, and more generally, functions that depend on just a small subset of their input variables. This brings us to the following definition.
	
	\begin{definition}[Truly $n$-variate Boolean function\footnote{Such functions are also called \emph{non-degenerate} Boolean functions in the literature~\cite{Simon}. }]
		We say that a Boolean function $f(x_1,\ldots,x_n)$ depends on its input variable $x_i$, or equivalently that $x_i$ is \emph{influential} for $f$, if there is an input $a$ such that flipping the value of the $i$th variable at $a$ changes the value of $f$ (in this case, we also say that $x_i$ is influential for $f$ at $a$). We say that a Boolean function $f:\{0,1\}^n\rightarrow \{0,1\}$ is \emph{truly $n$-variate} if it depends on all its $n$ variables. %That is, for each $i\in [n]$, there is an input $a$ to $f$ such that flipping the $i$th input of $a$ yields an input $b$ satisfying $f(a)\neq f(b).$
	\end{definition}
	
	A number of results have addressed questions regarding how small complexity measures can be for truly $n$-variate Boolean functions.
	\begin{enumerate}[itemsep=2pt,parsep=2pt,topsep=1pt,labelindent=\parindent]
		\item Motivated by problems in Learning theory and PRAM lower bounds, Nisan and Szegedy~\cite{NS} showed that any truly $n$-variate function has degree at least $\log n - O(\log \log n).$ Recently, this was improved to $\log n - O(1)$ by Chiarelli, Hatami and Saks~\cite{CHS}. There are standard examples of Boolean functions (see, e.g., the \emph{Addressing function} defined below) for which this is tight.
		\item Ambainis and de Wolf~\cite{AdW} studied the same question for the \emph{approximate degree of $f$}, which is defined to be the minimum degree of  a polynomial $P$ such that $|P(a)-f(a)|<1/3$ for all $a\in \{0,1\}^n.$ This complexity measure is closely related to the \emph{quantum query complexity of $f$}~\cite{BdW}. 
		
		Ambainis and de Wolf~\cite{AdW} showed that any truly $n$-variate function has approximate degree (and also quantum query complexity) $\Omega(\log n/\log \log n).$ They also constructed variants of the Addressing function for which this bound is tight up to constant factors.
		
		\item Such results are also known for more combinatorial complexity measures, such as the \emph{sensitivity} of a Boolean function $f$, which is defined as follows. The sensitivity of $f$ at a point $a\in \{0,1\}^n$ is the number of input variables to $f$ that are influential for $f$ at $a$. The sensitivity of $f$ is the maximum sensitivity of $f$ at any input. 
		
		Simon~\cite{Simon} showed that any truly $n$-variate $f$ has sensitivity at least $\log n - O(\log \log n).$ This is also tight up to the $O(\log \log n)$ additive term (say, for the Addressing function).
	\end{enumerate}
	
	We address Question~\ref{qn:main} for another well-known polynomial-degree measure called the \emph{Probabilistic degree.} We define this notion first.
	
	\begin{definition}[Probabilistic polynomial and Probabilistic degree]
		Given a Boolean function $f:\{0,1\}^n\rightarrow \{0,1\}$ and an $\varepsilon\geq 0,$ an \emph{$\varepsilon$-error probabilistic polynomial} for $f$ is a random polynomial $\bm{P}$ (with some distribution having finite support) over $\mathbb{R}[x_1,\ldots,x_n]$\footnote{This can also be defined over other fields.} such that for each $a\in \{0,1\}^n$,
		\[
		\prob{\bm{P}}{\bm{P}(a) \neq f(a)} \leq \varepsilon.
		\]
		(Note that $\bm{P}(a)$ need not be Boolean when $\bm{P}(a)\neq f(a).$)
		
		We say that the degree of $\bm{P}$, denoted $\deg(\bm{P})$, is at most $d$ if the probability distribution defining $\bm{P}$ is supported on polynomials of degree at most $d$. Finally, we define the \emph{$\varepsilon$-error probabilistic degree} of $f$, denoted $\pdeg_\varepsilon(f)$, to be the least $d$ such that $f$ has an $\varepsilon$-error probabilistic polynomial of degree at most $d$.
		
		In the special case that $\varepsilon = 1/3,$ we omit the subscript in the notation above and simply use $\pdeg(f).$
	\end{definition}
	
	The probabilistic degree is a fundamentally important and well-studied complexity measure of Boolean functions. It was implicitly introduced (in the finite field setting) in a celebrated result of Razborov~\cite{Razborov}, who showed how to use it to construct low-degree polynomial approximations to small-depth circuits, and hence prove strong circuit lower bounds. The real-valued version was first studied by Beigel, Reingold and Spielman~\cite{BRS} and Tarui~\cite{Tarui} who were motivated by other circuit lower bound questions and oracle separations. This measure has since found other applications in complexity theory~\cite{ABFR, Beigel}, Pseudorandom generator constructions~\cite{Braverman}, Learning theory~\cite{CIKK}, and Algorithm design~\cite{WilliamsAPSP,AlmanCW16}. Further, in many of these applications (e.g, \cite{ABFR, Braverman, AlmanCW16}) we \emph{need} real-valued approximations.
	
	Despite this, however, our understanding of probabilistic degree is much less developed than the other measures above. For instance, near-optimal lower bounds of $n^{1-o(1)}$ on the probabilistic degree of an explicit Boolean function $f:\{0,1\}^n\rightarrow \{0,1\}$ were proved only recently by Viola~\cite{Viola}, and are only known for a function in the complexity class $\mathrm{E}^{\mathrm{NP}}$; in comparison, the Parity function has degree and approximate degree $n$, which is the largest possible. Another example is the OR function on $n$ variables. It is trivial to estimate the degree of OR (which is $n$) and well-known that its approximate degree is $\Theta(\sqrt{n})$~\cite{NS,Grover}. However, its  probabilistic degree (over the reals) remains unknown: the best known upper bound is $O(\log n)$ due to independent results of Beigel et al.~\cite{BRS} and Tarui~\cite{Tarui}, while the best lower bound is $(\log n)^{1/2-o(1)}$ due to Harsha and the first author~\cite{HS}. This indicates that we need better tools to understand probabilistic degree in general and over the reals in particular. This is one of the motivations behind this paper.
	
	Another motivation is to understand the contrast between the setting of real-valued probabilistic polynomials and polynomials over constant-sized finite fields. At a high level, this helps us understand the contrast between circuit complexity classes $\AC^0$ and $\AC^0[p]$, as the former class of circuits has low-degree probabilistic polynomials over the reals~\cite{BRS,Tarui}, while the latter does not~\cite{Smolensky}. It is easy to show that there are truly $n$-variate Boolean functions of constant degree over finite fields (e.g., the parity function is a linear polynomial over the field $\F_2$). It is interesting to ask to what extent such phenomena fail over the reals.
	
	A final motivating reason is to understand more precisely the relationships between probabilistic degree and other complexity measures such as approximate degree. A recent conjecture of Golovnev, Kulikov and Williams~\cite{GKW} shows that porting results for approximate degree to probabilistic degree would have interesting consequences for De Morgan formula lower bounds. By proving results such as the one in this paper, we hope to be able to prove such connections and hopefully uncover others.
	
	With these motivations in mind, we address Question~\ref{qn:main} in the setting of Probabilistic degree. That is, what is the lowest possible probabilistic degree of a truly $n$-variate Boolean function? As far as we know, this question has not been addressed before. Putting together Simon's bound on the sensitivity of a truly $n$-variate function with known probabilistic degree lower bounds~\cite{HS}, one can show a lower bound of $(\log \log n)^{1/2 - o(1)}.$ This is quite far from the best known upper bounds of $O(\log n)$, which hold for say the OR function~\cite{BRS, Tarui} and the Addressing function defined below in Section~\ref{sec:pf-outline}. %We would like to get a tight answer to this question.

	\subsection{Results}
	\label{sec:results}
	
	Our aim is to prove a result characterizing the minimum possible probabilistic degree of a truly $n$-variate Boolean function. However, the gap even just in our understanding of the OR function (as mentioned above) tells us that this may not yet be within reach. What we are able to do is to give a near-complete characterization modulo the gap between known upper and lower bounds for $\pdeg(\OR).$ \emph{Moreover, the answer is non-trivial: it is not simply $\pdeg(\OR).$}
	
	More precisely, our results are the following. Below, $\OR_n$ denotes the OR function on $n$ variables. We assume that we have bounds of the form $\pdeg(\OR_n) = (\log n)^{c\pm o(1)}$ for some $c>0$.
	
	\begin{theorem}
		\label{thm:lbd}
		Assume that $\pdeg(\OR_n) \geq (\log n)^{c-o(1)}$ for some $c > 0$ and all large enough $n\in \mathbb{N}$. Then, any truly $n$-variate Boolean function $f:\{0,1\}^n\rightarrow \{0,1\}$ satisfies $\pdeg(f) \geq (\log n)^{(c/(c+1))-o(1)}.$ 
	\end{theorem}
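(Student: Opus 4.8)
The plan is to reduce the statement, for an arbitrary truly $n$-variate $f$ with $\pdeg(f) = d$, to the hypothesised bound $\pdeg(\OR_m) \ge (\log m)^{c - o(1)}$, by using the fact that $f$ depends on all its variables to expose a large $\OR$ (or $\AND$) inside a cheap restriction of $f$. The naive instantiation is far too weak: Simon's theorem produces a point of sensitivity $s \ge \log_2 n - O(\log\log n)$, and fixing the $n - s$ non-sensitive coordinates to their values at that point yields a function on $s$ variables of probabilistic degree at most $d$; but this function is only fully sensitive at a point, not literally $\OR_s$, and even in the best case this route gives $d \ge \pdeg(\OR_s) \ge (\log\log n)^{c-o(1)}$, exponentially smaller than the target $(\log n)^{(c/(c+1)) - o(1)}$. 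The whole difficulty is to exploit (almost) all $n$ influential variables of $f$ at once, rather than discarding all but $s$ of them.

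The structural engine I would use is a restriction lemma. Two elementary ingredients: (i) by the definition of an influential variable, every non-constant Boolean function has a restriction equal to a single literal; and (ii) if $g = \OR_t(g_1, \dots, g_t)$ (respectively $\AND_t(g_1, \dots, g_t)$) with the $g_i$ non-constant and living on pairwise disjoint blocks of variables, then restricting each block to a literal shows $\pdeg(g) \ge \pdeg(\OR_t)$. The main lemma I would aim for states that any truly $n$-variate $f$ admits a restriction of the form $\OR_t \circ (g_1, \dots, g_t)$ or $\AND_t \circ (g_1, \dots, g_t)$, with the $g_i$ non-constant on disjoint blocks, such that the blocks jointly retain a $1-o(1)$ fraction of the influential variables of $f$; in particular some block is again truly $(\ge n^{1-o(1)}/t)$-variate. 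I would try to prove this by a random-restriction argument — keep each variable alive independently with a suitable probability, argue that with positive probability many variables remain alive \emph{and} influential (using that no single coordinate can carry too much of the action without itself yielding a useful sub-structure) while the top of $f$ collapses onto an $\OR$ or $\AND$ of bounded arity — or, failing that, by expanding a minimum-size certificate of $f$ into a DNF/CNF-type structure.

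Given this lemma, I would iterate: start from $f$ on $n$ variables with $\pdeg(f) = d$; peel off an $\OR$ or $\AND$ of arity $t_1$; by (ii) and the hypothesis, $\pdeg(\OR_{t_1}) \le d$, so $\log t_1 \le d^{1/c + o(1)}$; descend into a block that is truly $(\ge n^{1-o(1)}/t_1)$-variate and still has probabilistic degree at most $d$; repeat. After $\ell$ rounds one is left with a truly $n'$-variate function with $n' \ge n^{1 - o(1)}/(t_1 \cdots t_\ell)$, so once $n'$ is constant we get $\log n \le (1+o(1)) \sum_{i} \log t_i$. Each term is at most $d^{1/c+o(1)}$, so it remains to bound the number of rounds $\ell$, and this is precisely where the exponent $c/(c+1)$ enters: if some peeled layer has near-full arity ($t_i = n^{\Omega(1)}$), then already $d \ge \pdeg(\OR_{n^{\Omega(1)}}) \ge (\log n)^{c - o(1)}$ and we are done with room to spare; and if all peeled arities are small, then the total "width budget" $\sum_i (\log t_i)^c$ spent realising these $\OR$'s inside a single probabilistic polynomial of degree $d$ must stay $O(d)$, which by convexity of $x \mapsto x^{1/c}$ (valid since $c\le 1$) forces $\sum_i \log t_i \le d \cdot d^{1/c + o(1)} = d^{(c+1)/c + o(1)}$. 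Combining, $\log n \le d^{(c+1)/c + o(1)}$, i.e. $d \ge (\log n)^{(c/(c+1)) - o(1)}$.

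The main obstacle is making this round-counting rigorous. Probabilistic degree of $\OR$-compositions is far from multiplicative — $\OR_t \circ \OR_s$ is just $\OR_{ts}$, of probabilistic degree $(\log ts)^c$ rather than $(\log t)^c(\log s)^c$ — so one cannot simply charge each peeled layer $(\log t_i)^c$ against a decreasing probabilistic-degree budget; yet assuming no decrease makes the recursion vacuous. Pinning down the correct bookkeeping — the structural lemma must deliver pieces that are "incompressible" in exactly the quantitative sense in which the probabilistic degree of $\OR$ grows, so that the two regimes above are genuinely exhaustive — together with ensuring the random-restriction step preserves a $1-o(1)$ fraction of the influential variables across all rounds, is where I expect the real work to lie.
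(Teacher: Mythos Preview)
Your proposal has a genuine gap at its two load-bearing steps. First, the ``restriction lemma'' is asserted but not proved, and the Addressing function already shows why it is problematic: $\mathrm{Addr}_r$ is truly $(r+2^r)$-variate, but there is no evident restriction of it to $\OR_t$ (or $\AND_t$) of non-constant functions on disjoint blocks that keeps a $1-o(1)$ fraction of the variables alive. Fixing addressing bits kills exponentially many addressed variables; fixing addressed variables does not produce any $\OR/\AND$ structure on disjoint blocks. So the lemma, as stated, is at best very hard and plausibly false. Second, the ``width budget'' inequality $\sum_i (\log t_i)^c \le O(d)$ has no justification: the layers you peel are \emph{nested} restrictions, and probabilistic degree does not decompose additively across such layers. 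You yourself note that $\OR_t\circ\OR_s$ collapses to $\OR_{ts}$, so there is no per-layer cost to charge against $d$; absent that, the round-counting gives nothing.

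The paper's argument is structurally different and avoids both obstacles. It splits on the sensitivity $s$ of $f$. When $s$ is large, a direct probabilistic reduction (random sub-sampling \`a la Beigel--Reingold--Spielman/Tarui) shows $\pdeg(f)\ge\pdeg(\OR_s)\cdot(\log s)^{-o(1)}$, so $\pdeg(f)\ge(\log s)^{c-o(1)}$. When $s$ is small, Huang's theorem gives a decision tree of depth $\poly(s)$; a random projection of the ``addressing'' variables down to $\poly(s)$ new variables turns $f$ into a pseudoaddressing function with $r=\poly(s)$ addressing variables and $t\ge n/\poly(s)$ addressed variables, and setting the addressed variables i.u.a.r.\ yields a random function on $\{0,1\}^r$ that is independent and uniform on $t$ points. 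A counting argument then gives $\pdeg(f)=\Omega(\log t/\log r)=\Omega(\log n/\log s)$. Balancing the two regimes at $(\log s)^{c+1}\asymp\log n$ gives $(\log n)^{c/(c+1)-o(1)}$. Note that the second regime does not go through $\OR$ at all; it is a random-function lower bound, and this is exactly what handles Addressing-like functions that your $\OR$-peeling scheme cannot.
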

	
	\begin{theorem}
		\label{thm:ubd}
		Assume that $\pdeg(\OR_n) \leq (\log n)^{c+o(1)}$ for some $c > 0$ and all large enough $n\in \mathbb{N}$. Then, there exists a truly $n$-variate  Boolean function $f:\{0,1\}^n\rightarrow \{0,1\}$ such that $\pdeg(f) \leq (\log n)^{(c/(c+1))+o(1)}.$
	\end{theorem}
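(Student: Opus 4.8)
The plan is to exhibit an explicit truly $n$-variate function — a recursively defined variant of the Addressing function — and to bound its probabilistic degree by composing probabilistic polynomials for $\mathrm{OR}$ on a \emph{modest} number of variables with (deterministic, low-degree) addressing polynomials. The point is that such a function can beat $\OR_n$ itself: the extremal example is not $\OR_n$, and the exponent $c/(c+1)$ will fall out of optimizing the recursion.

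The key structural fact I would use repeatedly is how probabilistic degree behaves under \emph{addressing composition}. If $g = \mathrm{ADDR}_k(y;\, h_1,\ldots,h_{2^k})$ where the $h_j$ act on pairwise disjoint blocks of variables, then a probabilistic polynomial for $g$ can be taken to be $\sum_{j} \chi_j(y)\,\bm{R}_j$, where $\chi_j(y) = \prod_i [y_i = j_i]$ is the exact degree-$k$ indicator of $\mathrm{val}(y)=j$ and $\bm{R}_j$ is an independent $\varepsilon$-error probabilistic polynomial for $h_j$. Since for every fixed input exactly one $\chi_j(y)$ is nonzero, the composed polynomial errs with probability at most $\varepsilon$ — crucially, with \emph{no} union bound over the $2^k$ data slots — and its degree is at most $k + \max_j \deg(\bm{R}_j)$. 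So addressing composition raises degree only additively and does not force the inner error down. By contrast, composing with an $\OR_N$ gate multiplies degree by $\pdeg(\OR_N) = (\log N)^{c+o(1)}$ (here we invoke the hypothesis on $\OR$ applied to the various smaller variable-counts appearing in the gadgets) but does cost a union bound over the $N$ inputs, hence an error-amplification factor $O(\log N)$ in the inner degree. The construction interleaves these two operations: addressing gadgets amplify the variable count while keeping the degree roughly fixed, and $\OR$-gadgets on few variables inject the ``true $n$-variate-ness'' at a cost of only $(\log N)^{c+o(1)}$ per use — ideally routed through a gadget that is simultaneously cheap (like $\OR$) and error-isolating (like addressing).

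Concretely, I would set up a recursion $f^{(0)},f^{(1)},\ldots,f^{(r)}$ with $f^{(r)}$ the final function, where $f^{(i+1)}$ is built from $f^{(i)}$ by a controlled combination of $\OR_N$-gates and addressing gadgets, so that $f^{(i+1)}$ lives on roughly $N\cdot 2^{k}\cdot n_i$ variables. Tracking $D_i := \pdeg_{1/3}(f^{(i)})$ gives a recursion of the shape $D_{i+1} \lesssim (\log N)^{c+o(1)}\bigl(k + D_i\cdot O(\log N)\bigr)$, which unrolls to a bound in $r$, $k$, $N$; imposing $n_r = n$ fixes one relation among the parameters. The heart of the argument is to choose $r$, $k$, $N$ as functions of $n$ to minimize the resulting degree and to verify that the optimum is $(\log n)^{c/(c+1)+o(1)}$. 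The intuition for why this exponent appears: variable-growth ``wants'' many rounds, but each round that passes through an $\OR$-gate multiplies the degree, so one balances the number of rounds against the per-round multiplicative penalty $(\log N)^{c+o(1)}$, and this balance point is exactly $c/(c+1)$.

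Two points need care. First, one must check that the final function is genuinely truly $n$-variate: each atomic variable should be influential, which I would verify level by level ($\OR$-gadget inputs are influential because the sibling gadgets can be set to $0$; addressing variables are influential because the addressed data slot can be made sensitive with all other blocks zeroed out). Second — and this is where I expect the main difficulty — managing the error budget across the recursion: every $\OR$-gate on the root-to-leaf path forces the subfunction beneath it to have error $\le 1/(6N)$, so error requirements compound downward, and one must ensure the implied $O(\log N)$-per-level degree blow-up is exactly what the recursion already accounts for, rather than something that escalates. Designing the per-level combination of addressing and $\OR$ so that the optimized recursion yields the exponent $c/(c+1)$ and not something larger (naïve arrangements tend to give only $(\log n)^{c}$ or even $(\log n)^{c+1}$) is the creative core of the proof.
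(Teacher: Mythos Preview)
There is a genuine gap: the recursion you set up does not reach the exponent $c/(c+1)$. Unrolling $D_{i+1}\lesssim(\log N)^{c+o(1)}\bigl(k+D_i\cdot O(\log N)\bigr)$ from $D_0=O(1)$ gives $D_r\approx(\log N)^{(c+1)r}\cdot\max(1,k/\log N)$, while $\log n_r\approx r(k+\log N)$. Every pass through an $\OR_N$ gate multiplies the degree by $(\log N)^{c+1}$ (the extra factor of $\log N$ being exactly the union-bound error reduction you correctly flag) but contributes only $\log N$ additively to $\log n$; addressing contributes $k$ additively to both. A direct optimization shows the minimum of $D_r$ subject to $\log n_r=\log n$ is attained at $r=1$, $k=0$, where it equals $(\log n)^{c+o(1)}$ --- i.e.\ you just recover $\pdeg(\OR_n)$ and nothing better. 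Your own caveat that ``na\"ive arrangements tend to give only $(\log n)^c$'' applies to the very recursion you wrote down; the ``gadget that is simultaneously cheap and error-isolating'' is the entire problem, and $\OR$ composed with addressing (in either order, any number of times) is not that gadget.

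The paper's construction is \emph{not} recursive. It is a single modified addressing function (following Ambainis and de Wolf): the address is a tuple $(i_1,\ldots,i_r)\in[s]^r$, but each symbol $i_j$ is presented as its length-$s$ Hadamard codeword $g_j\in\{0,1\}^s$; there are $s^r$ data bits $T(i_1,\ldots,i_r)$ and one fallback bit $b$. The key idea you are missing is coding-theoretic: restricted to valid Hadamard codewords, the indicator $[g_1=h_{i_1},\ldots,g_r=h_{i_r}]$ is computed \emph{exactly} by a polynomial $R_{i_1,\ldots,i_r}$ of degree $r$ (a product of $r$ normalized inner products, using orthogonality of characters), not degree $r\log s$ as binary addressing would give. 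Hence the data lookup $\sum_{i_1,\ldots,i_r}R_{i_1,\ldots,i_r}\cdot T(i_1,\ldots,i_r)$ costs only degree $r+1$ and, exactly as in your addressing lemma, incurs no union bound over the $s^r$ slots. Codeword validity is checked by a single $\AND$ over $rs^2$ linearity constraints, costing $\pdeg(\AND_{rs^2})=(\log s)^{c+o(1)}$; on invalid inputs the output is $b$, which also keeps the function truly $n$-variate. The total degree is $r+(\log s)^{c+o(1)}$ with $\log n\approx r\log s$; setting $r=(\log s)^c$ balances the two terms at $(\log n)^{c/(c+1)+o(1)}$.
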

	
	Thus, we get close-to-matching lower and upper bounds for truly $n$-variate Boolean functions assuming close-to-matching lower and upper bounds for the OR function. However, the above statements also imply \emph{unconditional} lower and upper bounds on the probabilistic degrees of truly $n$-variate Boolean functions. Using known results that yield $(\log n)^{(1/2)-o(1)}\leq \pdeg(\OR_n)\leq O(\log n)$~\cite{BRS, Tarui, HS}, we get
	
	\begin{corollary}
		\label{cor:lbd}
		Any truly $n$-variate  Boolean function $f:\{0,1\}^n\rightarrow \{0,1\}$  satisfies $\pdeg(f) \geq (\log n)^{(1/3)-o(1)}.$ 
	\end{corollary}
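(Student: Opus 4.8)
The plan is to derive Corollary~\ref{cor:lbd} directly from Theorem~\ref{thm:lbd}, using the best unconditional lower bound currently known for the OR function. Concretely, the lower bound of Harsha and the first author~\cite{HS} gives $\pdeg(\OR_n) \geq (\log n)^{(1/2) - o(1)}$ for all large enough $n$, which is precisely the hypothesis of Theorem~\ref{thm:lbd} with the choice $c = 1/2$. Invoking Theorem~\ref{thm:lbd} with this value of $c$ then shows that every truly $n$-variate Boolean function $f$ satisfies $\pdeg(f) \geq (\log n)^{(c/(c+1)) - o(1)}$; since $c/(c+1) = (1/2)/(3/2) = 1/3$, this is exactly the claimed bound $\pdeg(f) \geq (\log n)^{(1/3) - o(1)}$.

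The only point that needs a moment's thought is the bookkeeping of the $o(1)$ terms: the sub-polynomial slack $(\log n)^{-o(1)}$ coming from the OR lower bound must be absorbed into the sub-polynomial slack in the conclusion of Theorem~\ref{thm:lbd}. This is already built into the way that theorem is phrased -- both its hypothesis and its conclusion carry an $o(1)$ correction -- so no further argument is needed, and there is in fact no real obstacle at this step: all of the content is contained in Theorem~\ref{thm:lbd}, which we are taking as given. It is worth observing that this reduction is entirely parametric, so any future improvement of the OR lower bound to $\pdeg(\OR_n) \geq (\log n)^{c - o(1)}$ with $c > 1/2$ would immediately sharpen the exponent in the corollary from $1/3$ to $c/(c+1)$; in particular, establishing $\pdeg(\OR_n) = (\log n)^{1-o(1)}$ would yield exponent $1/2$, which together with Theorem~\ref{thm:ubd} would pin down the answer to Question~\ref{qn:main} for probabilistic degree up to $(\log n)^{o(1)}$ factors.
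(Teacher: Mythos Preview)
Your proposal is correct and follows exactly the paper's own derivation: the corollary is obtained immediately from Theorem~\ref{thm:lbd} by plugging in the Harsha--Srinivasan lower bound $\pdeg(\OR_n)\geq (\log n)^{(1/2)-o(1)}$, i.e.\ taking $c=1/2$ so that $c/(c+1)=1/3$.
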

	
	\begin{corollary}
		\label{cor:ubd}
		There exists a truly $n$-variate Boolean function $f:\{0,1\}^n\rightarrow \{0,1\}$ such that $\pdeg(f) \leq (\log n)^{(1/2)+o(1)}.$
	\end{corollary}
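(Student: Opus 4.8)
\textbf{Proof plan for Corollary~\ref{cor:ubd}.} The statement follows immediately by instantiating Theorem~\ref{thm:ubd} with the known upper bound on the probabilistic degree of OR. The plan is as follows.

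First I would recall the relevant known bound: by the independent results of Beigel, Reingold and Spielman~\cite{BRS} and Tarui~\cite{Tarui}, we have $\pdeg(\OR_n) = O(\log n)$. In particular, $\pdeg(\OR_n) \leq (\log n)^{1+o(1)}$ for all large enough $n$, so the hypothesis of Theorem~\ref{thm:ubd} is satisfied with $c = 1$.

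Next I would apply Theorem~\ref{thm:ubd} with this value $c = 1$. The theorem then guarantees the existence of a truly $n$-variate Boolean function $f:\{0,1\}^n \rightarrow \{0,1\}$ with $\pdeg(f) \leq (\log n)^{(c/(c+1)) + o(1)} = (\log n)^{(1/2) + o(1)}$, which is exactly the claimed bound. There is essentially no obstacle here; the only point requiring a word of care is that Theorem~\ref{thm:ubd} is stated for a fixed constant $c$ with a hypothesis holding for all large enough $n$, and one should note that the $O(\log n)$ bound on $\pdeg(\OR_n)$ indeed gives the required ``$(\log n)^{c+o(1)}$ for all large enough $n$'' with $c=1$, so the quantifiers line up. I would also remark that Corollary~\ref{cor:lbd} follows symmetrically from Theorem~\ref{thm:lbd} together with the lower bound $\pdeg(\OR_n) \geq (\log n)^{(1/2)-o(1)}$ of Harsha and the first author~\cite{HS}, which gives $c = 1/2$ and hence $\pdeg(f) \geq (\log n)^{(1/3)-o(1)}$; the two corollaries are thus immediate consequences of the two main theorems applied to the current state of knowledge about $\OR_n$.
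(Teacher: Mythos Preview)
Your proposal is correct and matches the paper's approach exactly: the paper derives Corollary~\ref{cor:ubd} immediately from Theorem~\ref{thm:ubd} by plugging in the known bound $\pdeg(\OR_n)\leq O(\log n)$ (i.e., $c=1$), and derives Corollary~\ref{cor:lbd} symmetrically from Theorem~\ref{thm:lbd} using the lower bound $(\log n)^{(1/2)-o(1)}$ (i.e., $c=1/2$). There is nothing to add.
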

	
	\begin{remark}
		\label{rem:o(1)}
		The reader may wonder why we assume lower and upper bounds of the form $(\log n)^{c\pm o(1)}$ for $\pdeg(\OR_n)$. This is because the gaps between the known upper and lower bounds are $(\log n)^{\Omega(1)}$, and so it makes sense to use a characterization that shrinks this gap to something relatively insignificant. Furthermore, the best known lower bound on $\pdeg(\OR_n)$ is of the form $(\log n)^{1/2 - o(1)}$~\cite{HS} (more precisely, it is $\Omega((\log n)/(\log \log n)^{3/2})$).
		
		If we instead assume a more precise characterization $\pdeg(\OR_n) = \Theta((\log n)^c),$ then going through the proofs of the above theorems would yield a sharper lower bound of $\Omega((\log n)^{c/(c+1)}/(\log \log n)^{2})$ for any truly $n$-variate Boolean function and a better upper bound of $O((\log n)^{c/(c+1)})$ for some truly $n$-variate Boolean function.
	\end{remark}
	
	\subsection{Proof Outline}
	\label{sec:pf-outline}
	
	Our proof is motivated by two important examples. The first of these is the $\OR_n$ function which has probabilistic degree at most $O(\log n)$ by results of~\cite{BRS,Tarui} and at least $(\log n)^{(1/2)-o(1)}$ by~\cite{HS}. The second is the \emph{Addressing function}, which we now define. 
	
	The Addressing function $\mathrm{Addr}_r$ has $n= r+2^r$ variables. We think of the input variables as being divided into two parts: there are $r$ `addressing' variables $y_1,\ldots,y_r$ and $2^r$ `addressed' variables $\{z_a\ |\ a\in \{0,1\}^r\}$ (the latter part of the input is thus indexed by elements of $\{0,1\}^r$). On an input $(a,A)\in \{0,1\}^{r}\times \{0,1\}^{2^r},$ the output of the function is defined to be $A_a$ (i.e. the $a$th co-ordinate of the vector $A$). The Addressing function satisfies $\deg(\mathrm{Addr}_r) = r+1 = O(\log n).$ This example is quite relevant to this line of work: in particular, it implies that the results of Nisan and Szegedy~\cite{NS} and Chiarelli et al.~\cite{CHS} stated above are tight, and is also a tight example for Simon's theorem~\cite{Simon}.
	
	We now describe the upper and lower bound proofs, starting with the less technical upper bound.
	
	\paragraph{The Upper Bound.} Given that we have two natural families of truly $n$-variate functions that have degree $O(\log n),$ one may suspect that this is the best possible. Indeed this was also our initial conjecture. However, using the ideas of Ambainis and de Wolf~\cite{AdW}, we can do better.  Ambainis and de Wolf showed that there are truly $n$-variate Boolean functions that have \emph{approximate degree} $O(\log n/\log \log n).$ Their construction\footnote{They actually give two, slightly different, constructions. We use the second one here.} uses a modified Addressing function, where the addressing variables are present in an `encoded' form. While this blows up the size of the first part of the input, this does not affect $n$ much as the addressing variables take up only a small part of the input. On the other hand, the advantage is that the `decoding' procedure can be performed approximately by a suitable low-degree polynomial: a proof of this uses two famous Quantum algorithms, the \emph{Bernstein-Vazirani algorithm} and \emph{Grover search}, along with the fact that efficient Quantum algorithms yield approximating low-degree polynomials~\cite{BBCMW}. Putting things together yields an improved approximate degree bound for some $n$-variate $f$.
	
	We show how to port their construction to the probabilistic degree setting. The first observation is that Grover search, which is essentially an algorithm for computing $\OR_n$, is much more `efficient' in the probabilistic degree setting, as $\pdeg(\OR_n) = O(\log n)$, while its approximate degree is $\Omega(\sqrt{n})$~\cite{NS}. The second observation is that the Bernstein-Vazirani algorithm, which can be thought of as a decoding algorithm for a suitable error-correcting code, can be replaced by polynomial interpolation. This gives a good idea of why we should also be able to use a similar construction in the probabilistic degree setting. In fact, the better probabilistic degree upper bound for $\OR_n$ implies that we should be able to get a \emph{better} bound than what is possible for approximate degree. Indeed this is true. By a similar construction, we show that we can construct a truly $n$-variate $f$ with probabilistic degree $O(\sqrt{\log n})$ \emph{unconditionally}, which is quite a bit better than previous results for any of the above degree measures. If we assume, moreover, that $\pdeg(\OR_n) \leq (\log n)^{c+o(1)},$ the same construction yields a function with probabilistic degree $(\log n)^{(c/(c+1))+o(1) }$.
	
	\paragraph{The Lower Bound.} Given that the upper bound construction uses the Addressing function as well as the OR function, it is only natural that the lower bound would use the lower bounds for these two families of functions. Our hypothesis already assumes a lower bound of $(\log n)^{c-o(1)}$ for $\pdeg(\OR_n).$ For the addressing function $\mathrm{Addr}_r$ described above, one can prove an $\Omega(r) = \Omega(\log n)$ lower bound in the following way. We observe that by setting the $2^r$ addressed variables uniformly at random, we obtain a uniformly random function on the $r$ addressing variables. By a counting argument, one can show that a uniformly random Boolean function $\bm{F}$ on $r$ variables has probabilistic degree $\Omega(r)$ with high probability. In particular, as setting some input variables to constants can only reduce probabilistic degree, this implies that $\pdeg(\mathrm{Addr}_r) = \Omega(r) = \Omega(\log n)$. Note that this is tight, as $\deg(\mathrm{Addr}_r) = r+1.$
	
	Our aim is to generalize the above lower bounds enough to prove a lower bound for any truly $n$-variate $f$. The first informal observation is that the $\OR_n$ function is the `simplest' function on $n$ variables to have sensitivity $n$. Therefore, it is intuitive that any Boolean function with sensitivity $n$ should have probabilistic degree at least that of the $\OR_n$ function. We show that this is true, up to $(\log n)^{o(1)}$ factors. More generally, we show that any Boolean function $f$ with sensitivity $s$ has probabilistic degree at least that of the OR function on $s$ variables (up to $(\log s)^{o(1)}$ factors). The proof of this is in the contrapositive: we use a probabilistic degree upper bound for $f$ to construct a probabilistic polynomial for $\OR_s$. The ideas behind this go back to a sampling argument used in the works of Beigel et al. and Tarui~\cite{BRS,Tarui}. Viewing this argument more abstractly, we can use this to construct a reduction from $\OR_s$ to $f$ (for any $f$ of sensitivity $s$) in the probabilistic degree setting. 
	
	%% Figures
	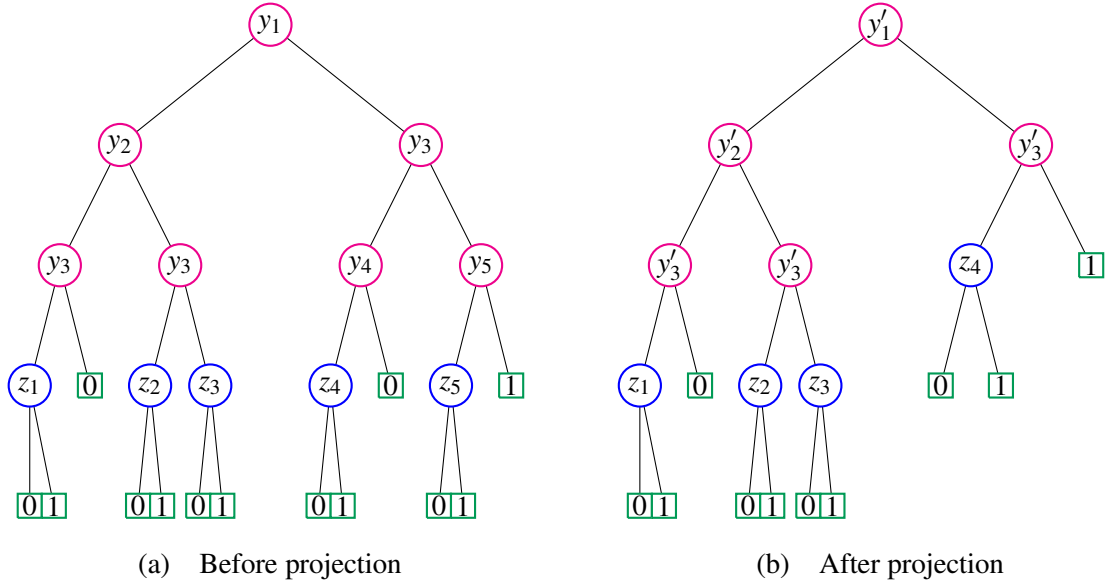
\begin{figure}[t!]
		\centering
		\begin{tikzpicture}[scale=0.8]
			\draw (0,0)--(-5+2.5,-2);
			\draw (0,0)--(5-2.5,-2);
			
			\draw (-5+2.5,-2)--(-6+2.5,-4);
			\draw (-5+2.5,-2)--(-4+2.5,-4);
			\draw (5-2.5,-2)--(4-2.5,-4);
			\draw (5-2.5,-2)--(6-2.5,-4);
			
			\draw (-7+1+2.5,-4)--(-7.5+1+2.5,-6);
			\draw (-7+1+2.5,-4)--(-6-0.5+1+2.5,-6);
			\draw (-3-1+2.5,-4)--(-4+0.5-1+2.5,-6);
			\draw (-3-1+2.5,-4)--(-2-0.5-1+2.5,-6);
			\draw (3+1-2.5,-4)--(2+0.5+1-2.5,-6);
			\draw (3+1-2.5,-4)--(4-0.5+1-2.5,-6);
			\draw (7-1-2.5,-4)--(6+0.5-1-2.5,-6);
			\draw (7-1-2.5,-4)--(7.5-1-2.5,-6);
			
			\draw (-7.5+1+2.5,-6)--(-7.5+1+2.5,-8);
			\draw (-7.5+1+2.5,-6)--(-6.5-0.4-0.2+1+2.5,-8);
			\draw (-4+0.5-1+2.5,-6)--(-4.5+0.2+0.5+0.1-1+2.5,-8);
			\draw (-4+0.5-1+2.5,-6)--(-3.5-0.2+0.5-0.1-1+2.5,-8);
			\draw (-2-0.5-1+2.5,-6)--(-2.5+0.2-0.5+0.1-1+2.5,-8);
			\draw (-2-0.5-1+2.5,-6)--(-1.5-0.2-0.5-0.1-1+2.5,-8);
			\draw (2+0.5+1-2.5,-6)--(1.5+0.2+0.5+0.1+1-2.5,-8);
			\draw (2+0.5+1-2.5,-6)--(2.5-0.2+0.5-0.1+1-2.5,-8);
			\draw (6+0.5-1-2.5,-6)--(5.5+0.2+0.5+0.1-1-2.5,-8);
			\draw (6+0.5-1-2.5,-6)--(6.5-0.2+0.5-0.1-1-2.5,-8);

			\vertexcirc{0}{0}{\(y_1\)}{}{Magenta,thick}
			
			\vertexcirc{-5+2.5}{-2}{\(y_2\)}{}{Magenta,thick}
			\vertexcirc{5-2.5}{-2}{\(y_3\)}{}{Magenta,thick}
			
			\vertexcirc{-6+2.5}{-4}{\(y_3\)}{}{Magenta,thick}
			\vertexcirc{-4+2.5}{-4}{\(y_3\)}{}{Magenta,thick}
			\vertexcirc{4-2.5}{-4}{\(y_4\)}{}{Magenta,thick}
			\vertexcirc{6-2.5}{-4}{\(y_5\)}{}{Magenta,thick}
			
			\vertexcirc{-7.5+1+2.5}{-6}{\(z_1\)}{}{blue,thick}
			\vertexsq{-6.5+1+2.5}{-6}{0}{}{ForestGreen,thick}
			\vertexcirc{-3.5-1+2.5}{-6}{\(z_2\)}{}{blue,thick}
			\vertexcirc{-2.5-1+2.5}{-6}{\(z_3\)}{}{blue,thick}
			\vertexcirc{2.5+1-2.5}{-6}{\(z_4\)}{}{blue,thick}
			\vertexsq{3.5+1-2.5}{-6}{0}{}{ForestGreen,thick}
			\vertexcirc{6.5-1-2.5}{-6}{\(z_5\)}{}{blue,thick}
			\vertexsq{7.5-1-2.5}{-6}{1}{}{ForestGreen,thick}
			
			\vertexsq{-7.5+1+2.5}{-8}{0}{}{ForestGreen,thick}
			\vertexsq{-6.5-0.4-0.2+1+2.5}{-8}{1}{}{ForestGreen,thick}
			\vertexsq{-4.5+0.2+0.5+0.1-1+2.5}{-8}{0}{}{ForestGreen,thick}
			\vertexsq{-3.5-0.2+0.5-0.1-1+2.5}{-8}{1}{}{ForestGreen,thick}
			\vertexsq{-2.5+0.2-0.5+0.1-1+2.5}{-8}{0}{}{ForestGreen,thick}
			\vertexsq{-1.5-0.2-0.5-0.1-1+2.5}{-8}{1}{}{ForestGreen,thick}
			\vertexsq{1.5+0.2+0.5+0.1+1-2.5}{-8}{0}{}{ForestGreen,thick}
			\vertexsq{2.5-0.2+0.5-0.1+1-2.5}{-8}{1}{}{ForestGreen,thick}
			\vertexsq{5.5+0.2+0.5+0.1-1-2.5}{-8}{0}{}{ForestGreen,thick}
			\vertexsq{6.5-0.2+0.5-0.1-1-2.5}{-8}{1}{}{ForestGreen,thick}
			
			\node at (0,-9) {(a)\quad Before projection};
		\end{tikzpicture}
		\hspace{1cm}
		\begin{tikzpicture}[scale=0.8]
			\draw (0,0)--(-5+2.5,-2);
			\draw (0,0)--(5-2.5,-2);
			
			\draw (-5+2.5,-2)--(-6+2.5,-4);
			\draw (-5+2.5,-2)--(-4+2.5,-4);
			\draw (5-2.5,-2)--(4-2.5,-4);
			\draw (5-2.5,-2)--(6-2.5,-4);
			
			\draw (-7+1+2.5,-4)--(-7.5+1+2.5,-6);
			\draw (-7+1+2.5,-4)--(-6-0.5+1+2.5,-6);
			\draw (-3-1+2.5,-4)--(-4+0.5-1+2.5,-6);
			\draw (-3-1+2.5,-4)--(-2-0.5-1+2.5,-6);
			\draw (3+1-2.5,-4)--(2+0.5+1-2.5,-6);
			\draw (3+1-2.5,-4)--(4-0.5+1-2.5,-6);
			
			\draw (-7.5+1+2.5,-6)--(-7.5+1+2.5,-8);
			\draw (-7.5+1+2.5,-6)--(-6.5-0.4-0.2+1+2.5,-8);
			\draw (-4+0.5-1+2.5,-6)--(-4.5+0.2+0.5+0.1-1+2.5,-8);
			\draw (-4+0.5-1+2.5,-6)--(-3.5-0.2+0.5-0.1-1+2.5,-8);
			\draw (-2-0.5-1+2.5,-6)--(-2.5+0.2-0.5+0.1-1+2.5,-8);
			\draw (-2-0.5-1+2.5,-6)--(-1.5-0.2-0.5-0.1-1+2.5,-8);

			\vertexcirc{0}{0}{\(y'_1\)}{}{Magenta,thick}
			
			\vertexcirc{-5+2.5}{-2}{\(y'_2\)}{}{Magenta,thick}
			\vertexcirc{5-2.5}{-2}{\(y'_3\)}{}{Magenta,thick}
			
			\vertexcirc{-6+2.5}{-4}{\(y'_3\)}{}{Magenta,thick}
			\vertexcirc{-4+2.5}{-4}{\(y'_3\)}{}{Magenta,thick}
			\vertexcirc{4-2.5}{-4}{\(z_4\)}{}{blue,thick}
			\vertexsq{6-2.5}{-4}{1}{}{ForestGreen,thick}
			
			\vertexcirc{-7.5+1+2.5}{-6}{\(z_1\)}{}{blue,thick}
			\vertexsq{-6.5+1+2.5}{-6}{0}{}{ForestGreen,thick}
			\vertexcirc{-3.5-1+2.5}{-6}{\(z_2\)}{}{blue,thick}
			\vertexcirc{-2.5-1+2.5}{-6}{\(z_3\)}{}{blue,thick}
			\vertexsq{2.5+1-2.5}{-6}{0}{}{ForestGreen,thick}
			\vertexsq{3.5+1-2.5}{-6}{1}{}{ForestGreen,thick}
			
			\vertexsq{-7.5+1+2.5}{-8}{0}{}{ForestGreen,thick}
			\vertexsq{-6.5-0.4-0.2+1+2.5}{-8}{1}{}{ForestGreen,thick}
			\vertexsq{-4.5+0.2+0.5+0.1-1+2.5}{-8}{0}{}{ForestGreen,thick}
			\vertexsq{-3.5-0.2+0.5-0.1-1+2.5}{-8}{1}{}{ForestGreen,thick}
			\vertexsq{-2.5+0.2-0.5+0.1-1+2.5}{-8}{0}{}{ForestGreen,thick}
			\vertexsq{-1.5-0.2-0.5-0.1-1+2.5}{-8}{1}{}{ForestGreen,thick}
			
			\node at (0,-9) {(b)\quad After projection};
		\end{tikzpicture}
		\caption{\small The function $f(y_1,\ldots,y_5,z_1,\ldots,z_5)$ is defined by the decision tree on the left (we assume that the left child corresponds to the queried variable taking value $0$). When $z_1,\ldots,z_5$ are set i.u.a.r. to $\bm{b}_1,\ldots,\bm{b}_5$, we get a random function $\bm{F}(y_1,\ldots,y_5)$ such that $\bm{F}(00000) = \bm{b}_1, \bm{F}(01000) = \bm{b}_2, \bm{F}(01100) = \bm{b}_3, \bm{F}(10000) = \bm{b}_4, \bm{F}(10100) = \bm{b}_5.$ After a projection that maps $y_1\mapsto y_1'; y_2 \mapsto y_2'; y_3,y_4,y_5\mapsto y_3'$, we get the function $f'$ computed by the tree on the right. This reduces the number of addressing variables to $3$. But also note that the variable $z_5$ is no longer relevant as the path leading to it is inconsistent with the projection. So the number of addressed variables falls to $4$.} 
		\label{fig:outline}
	\end{figure}
	
	The above argument implies a strong lower bound for any $n$-variate $f$ with large sensitivity. In particular, it implies that if $f$ has sensitivity at least $s = n^{\Omega(1)}$, then its probabilistic degree is almost that of the OR function. We now consider the case of functions with small sensitivity (specifically when $s = n^{o(1)}$), which is the most technical part of the proof. By a recent breakthrough result of Huang~\cite{Huang}, we also know that $f$ also has a \emph{decision tree}  (we refer the  reader to~\cite{BdW} for the definition of Decision trees) of depth $d = \poly(s) = n^{o(1)}.$\footnote{Strictly speaking, we do not need to use Huang's result as we could also use the known polynomial relationship between the decision tree height and the \emph{block sensitivity} of a function~\cite{Nisan-bs}. But it is notationally easier to work with sensitivity.} The prototypical example of such an $f$ is the Addressing function which has a decision tree of depth $r +1 =\lfloor \log n\rfloor + 1,$ which we argued a lower bound for above. The idea, in general, is to find a copy of something like an Addressing function `inside' the function $f$.
	
	We illustrate how this argument works by considering a special case of the problem, which is only a small variant of the Addressing function. Assume that a truly $n$-variate function $f$ is computed by a decision tree $T$ of depth $d = \poly(\log n)$. Note that as the function depends on all its variables, each of the underlying $n$ variables appear in the tree $T$. To make things even simpler, assume that we have $n/2$ `addressing' variables $y_1,\ldots,y_{n/2}$ and $n/2$ `addressed' variables $z_1,\ldots,z_{n/2}.$ The tree reads $d-1$ addressing variables among $y_1,\ldots,y_{n/2}$ in some (possibly adaptive) fashion and then possibly queries one addressed variable, the value of which is output. (See Figure~\ref{fig:outline} (a).)
	
	How do we argue a lower bound on $\pdeg(f)$? We could try to proceed as above and set the addressed variables $z_1,\ldots,z_{n/2}$ as random to obtain a random function $\bm{F}$ in $y_1,\ldots,y_{n/2}.$ However, this function is not \emph{uniformly} random, as it is sampled using only $n/2$ random bits, while the number of functions in $n/2$ variables is $2^{2^{n/2}}.$ Nevertheless, we can observe that the function $\bm{F}$ does take independent random values at at least $n/2$ distinct inputs, those which are consistent with $n/2$ distinct paths in $T$ leading to the various addressed variables. (See Figure~\ref{fig:outline} (a).) We could try to lower bound $\pdeg(\bm{F})$ as above.
	
	This leads to the following general question: given a random function $\bm{F}:\{0,1\}^r\rightarrow \{0,1\}$ that takes independent and random values at $M$ distinct inputs in $\{0,1\}^r$, what can we say about the probabilistic degree of $\bm{F}$? By a more general counting argument, we are able to show that with high probability, the probabilistic degree of $\bm{F}$ is at least $\Omega(\log M/\log r).$ This is easily seen to be tight in the case that $X$ is, say, a Hamming ball of radius $R\leq r^{1-\Omega(1)}$.  (In the case that $M=2^r,$ this leads to a bound of $\Omega(r/\log r)$, nearly matching the claim for random functions that we mentioned above. A tight bound can be obtained in the same way but is harder to state for general $M$.)
	
	Given this bound for random functions, we can try to use it in the case of the function $f$ above. Unfortunately, in this case, both parameters $r$ and $M$ are $n/2$, and hence we do not get any non-trivial bound. However, we show that we can still reduce to a case where a non-trivial bound is possible (this is where the depth of $T$ comes in). More precisely, we  reduce the number of addressing variables by \emph{projecting} the $n/2$ addressing variables to a smaller set of $r'$ variables $Y' = \{y'_1,\ldots,y'_{r'}\}.$ That is, we randomly set each variable $Y$ to a uniformly random variable in $Y'$ to get a different function in the variables $Y'\cup Z$. This has the effect of reducing the number of addressing variables to $r'$. But there is also a potential problem: the projection could also render some of the addressed variables irrelevant, as the paths that lead to them become inconsistent. (See Figure~\ref{fig:outline} (b).)
	
	Nevertheless, if we choose $r'$ large enough (something like $r' = 4d^2$ is enough by the Birthday paradox), the variables of each path are sent to \emph{distinct} variables in $Y'$ with high probability, which implies that each addressed variable remains relevant with high probability. In particular, there is a projection that maps $f$ to an `Addressing function' with only $\poly(\log n)$ addressing variables and $\Omega(n)$ addressed variables. Now applying the argument for random functions, we get a probabilistic degree lower bound of $\Omega(\log n/\log \log n)$ for this function, nearly matching what we obtained for the Addressing function. As projections do not increase probabilistic degree, the same bound holds for $f$, concluding the proof in this special case.\footnote{Random projections of this kind have been used recently to prove important results in circuit complexity~\cite{HRST,COST}. However, as far as we know, they have not been used to prove probabilistic degree lower bounds.}
	
	A similar argument can be carried out in the general case by first carefully partitioning the variables into the addressing and addressed variables. We do this by looking at the structure of the decision tree $T$. These details are postponed to the formal proof. In general, this argument yields a lower bound of $\Omega(\log n/\log s)$ on the probabilistic degree of a truly $n$-variate function $f$ with sensitivity at most $s$. 
	
	Using this lower bound along with the previous lower bound for functions of sensitivity at least $s$, and optimizing our choice of $s$, yields a lower bound of $(\log n)^{c/(c+1)-o(1)}$ for any truly $n$-variate function $f$.
	
	\section{Preliminaries}
	
	\paragraph{Functions, Restrictions, Projections.} Throughout, we work with real-valued functions $f:\{0,1\}^n\rightarrow \mathbb{R}$. Boolean functions (i.e. functions mapping $\{0,1\}^n$ to $\{0,1\}$) are also treated as real-valued. We use boldface notation to denote random variables. A \emph{random function} $\bm{F}$ is a probability distribution over functions. 
	
	A \emph{restriction} on $n$ variables is a map $\rho:[n]\rightarrow \{0,1,*\}$. Given a function $f:\{0,1\}^n\rightarrow \mathbb{R}$ and a restriction $\rho$ on $n$ variables, we have a natural restricted function $f_\rho$ defined by setting the $i$th input variable to $f$ to $0$, $1$ or  leaving it as is, depending on whether $\rho(i)$ is $0$, $1$ or $*$ respectively. Note that the function $f_\rho$ now depends on $|\rho^{-1}(*)|$ many variables. However, we sometimes also treat $f_\rho$ as  a function of all the original variables that only \emph{depends} on (a subset of) the variables indexed by $\rho^{-1}(*).$
	
	A \emph{projection} from $n$ variables to $m$ variables is a map $\nu:[n] \rightarrow [m]$. Given a function $f:\{0,1\}^n\rightarrow \mathbb{R}$ and a projection $\nu$ from $n$ variables to $m$ variables, we get a function $f|_\nu:\{0,1\}^m\rightarrow \mathbb{R}$ by identifying variables of $f$ that map to the same image under $\nu$.
	
	\paragraph{Some Boolean functions.} For any positive integer $n$, we use $\OR_n, \AND_n$ and $\Maj_n$ to denote the OR, AND and Majority functions on $n$ variables respectively.
	
	\begin{fact}
		\label{fac:pdeg}
		We have the following simple facts about probabilistic polynomials.
		\begin{enumerate}[itemsep=2pt,parsep=2pt,topsep=1pt,labelindent=\parindent]
			\item (Interpolation) Any function $f:\{0,1\}^n\rightarrow \{0,1\}$ has an exact multilinear polynomial representation of degree at most $n$. I.e. $\deg(f) := \pdeg_0(f) \leq n.$
			\item (Shifts and Restrictions) Fix any $f:\{0,1\}^n\rightarrow \{0,1\}$ and any $\varepsilon \geq 0$. Then the function $g:\{0,1\}^n\rightarrow \{0,1\}$ defined by $g(x) = f(x\oplus y)$ for a fixed $y\in \{0,1\}^n$ has the same probabilistic degree as $f$, i.e., $\pdeg_\varepsilon(g) = \pdeg_\varepsilon(f)$. 
			
			If $g:\{0,1\}^m\rightarrow \{0,1\}$ is a restriction or a projection of $f$, then $\pdeg_\varepsilon(g) \leq \pdeg_\varepsilon(f).$
			\item (Error reduction~\cite{HS}) For any $\delta < \varepsilon \leq 1/3$ and any Boolean function $f$, if $\bm{P}$ is an $\varepsilon$-error probabilistic polynomial for $f$, then $\bm{Q} = M(\bm{P}_1,\ldots,\bm{P}_\ell)$ is a $\delta$-error probabilistic polynomial for $f$ where \(\ell=O(\log(1/\delta)/\log(1/\varepsilon))\), $M$ is the exact multilinear polynomial for $\Maj_\ell$ and $\bm{P}_1,\ldots,\bm{P}_\ell$ are independent copies of $\bm{P}.$ In particular, we have $\pdeg_{\delta}(f) \leq \pdeg_{\varepsilon}(f)\cdot O(\log(1/\delta)/\log(1/\varepsilon)).$ 
			\item (Composition) For any Boolean function $f$ on $k$ variables and any Boolean functions $g_1,\ldots,g_k$ on a common set of $m$ variables,  let $h$ denote the natural composed function $f(g_1,\ldots,g_k)$ on $m$ variables. For $\varepsilon, \delta_1,\ldots,\delta_k \geq 0$, let $\bm{P}, \bm{Q}_1,\ldots,\bm{Q}_k$ be probabilistic polynomials for $f,g_1,\ldots,g_k$ respectively with errors $\varepsilon, \delta_1,\ldots,\delta_k$ respectively. Then, $\bm{R} = \bm{P}(\bm{Q}_1,\ldots,\bm{Q}_k)$ is a probabilistic polynomial for $h$ with error at most $\varepsilon + \sum_i \delta_i.$
			
			In particular, for any $\varepsilon, \delta > 0,$ we have $\pdeg_{\varepsilon + k\delta}(h) \leq \pdeg_\varepsilon(f)\cdot \max_{i\in [k]} \pdeg_\delta(g_i).$
			%		\item (Sum) Assume that $f,g_1,\ldots,g_k$ are all Boolean functions on a common set of $m$ variables such that $f = \sum_{i\in [k]}g_i$. Then, for any $\delta > 0,$ we have $\pdeg_{k\delta}^\F(f) \leq \max_{i\in [k]} \pdeg_\delta^\F(g_i).$
		\end{enumerate}
	\end{fact}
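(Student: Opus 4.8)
The plan is to verify each of the four items essentially from the definitions; only the error-reduction statement (item 3) requires a genuine idea, so I will concentrate there and treat the rest as bookkeeping.

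For item 1 I would use the standard interpolation identity: for $a\in\{0,1\}^n$ the multilinear polynomial $\chi_a(x)=\prod_{i:a_i=1}x_i\cdot\prod_{i:a_i=0}(1-x_i)$ has degree $n$ and equals the $\{0,1\}$-indicator of $a$ on the cube, so $f=\sum_{a:f(a)=1}\chi_a$ is an exact multilinear representation of degree at most $n$. For item 2, each of the three operations is a degree-non-increasing substitution applied to an $\varepsilon$-error polynomial $\bm{P}$ for $f$: for the shift $g(x)=f(x\oplus y)$ substitute $x_i\mapsto x_i$ or $x_i\mapsto 1-x_i$ according to $y_i$; for a restriction $\rho$ plug in the fixed bits $\rho(i)$; for a projection $\nu$ substitute $x_i\mapsto x_{\nu(i)}$. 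In every case the degree does not grow, and a correct evaluation of $\bm{P}$ at the corresponding point of $\{0,1\}^n$ yields a correct evaluation of the new random polynomial at the given point of its cube, so the error stays $\le\varepsilon$; for shifts one gets equality by applying the same argument with the inverse shift (again $y$).

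Item 3 is the crux. Given an $\varepsilon$-error polynomial $\bm{P}$ with $\varepsilon\le 1/3$, take $\ell$ independent copies $\bm{P}_1,\ldots,\bm{P}_\ell$ and set $\bm{Q}=M(\bm{P}_1,\ldots,\bm{P}_\ell)$, where $M$ is the exact multilinear polynomial for $\Maj_\ell$; since $\deg(M)\le\ell$ we get $\deg(\bm{Q})\le\ell\cdot\deg(\bm{P})$. The key observation is a robustness property of $M$: if strictly more than $\ell/2$ of its arguments equal a common bit $b\in\{0,1\}$, then $M$ outputs $b$ regardless of the remaining arguments. This follows by expanding $M=\sum_{|S|>\ell/2}\prod_{i\in S}x_i\prod_{i\notin S}(1-x_i)$ and noting that fixing $k>\ell/2$ coordinates to $b$ annihilates all terms except a family whose sum telescopes to $b$. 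Consequently, for a fixed input $a$, $\bm{Q}(a)\neq f(a)$ forces at least $\ell/2$ of the independent events ``$\bm{P}_i(a)\neq f(a)$'' to occur, so $\Pr[\bm{Q}(a)\neq f(a)]\le\Pr[\mathrm{Bin}(\ell,\varepsilon)\ge\ell/2]$; a routine binomial-tail (Chernoff) estimate bounds this by $\delta$ once $\ell=O(\log(1/\delta)/\log(1/\varepsilon))$, possibly after a constant number of preliminary copies to push $\varepsilon$ safely below $1/4$. Taking $\deg(\bm{P})=\pdeg_\varepsilon(f)$ gives the stated inequality.

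Item 4 is again a union bound. Take $\bm{P},\bm{Q}_1,\ldots,\bm{Q}_k$ mutually independent and set $\bm{R}=\bm{P}(\bm{Q}_1,\ldots,\bm{Q}_k)$, so $\deg(\bm{R})\le\deg(\bm{P})\cdot\max_i\deg(\bm{Q}_i)$. Fix $a\in\{0,1\}^m$: with probability at least $1-\sum_i\delta_i$ we have $\bm{Q}_i(a)=g_i(a)$ for all $i$, and conditioned on this the point $(g_1(a),\ldots,g_k(a))$ lies in $\{0,1\}^k$, so by independence $\bm{R}(a)=\bm{P}(g_1(a),\ldots,g_k(a))=f(g_1(a),\ldots,g_k(a))=h(a)$ except with probability $\le\varepsilon$; hence $\Pr[\bm{R}(a)\neq h(a)]\le\varepsilon+\sum_i\delta_i$. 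Specializing $\delta_i=\delta$ and replacing $\deg(\bm{P}),\deg(\bm{Q}_i)$ by the corresponding probabilistic degrees yields the final bound. The only step that is more than definition-chasing is the robustness identity for $M$ in item 3; I expect that to be the (mild) main obstacle, with everything else following directly.
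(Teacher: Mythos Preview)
The paper does not actually prove this statement: it is listed as a ``Fact'' with item 3 attributed to \cite{HS} and the remaining items left as folklore. Your proposal correctly supplies the missing arguments, and the only nontrivial step---the robustness identity for the multilinear majority polynomial $M$ on inputs where a strict majority of coordinates are a common bit $b$ but the rest may be arbitrary reals---is handled correctly (fixing $k>\ell/2$ coordinates to $1$ kills every term with $[k]\not\subseteq S$ and the survivors sum to $\prod_{i>k}(x_i+(1-x_i))=1$; fixing them to $0$ kills every term with $S\cap[k]\neq\emptyset$ and no surviving $S$ can have $|S|>\ell/2$). The Chernoff step and the handling of $\varepsilon\in[1/4,1/3]$ via a constant number of preliminary copies are also fine. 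There is nothing to compare against in the paper itself.
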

	
	We will need the following known upper and lower bounds on $\pdeg(\OR_n).$
	
	\begin{theorem}[\cite{BRS,Tarui}]
		\label{thm:BRST}
		$\pdeg_\varepsilon(\OR_n) = O(\log n \log(1/\varepsilon)).$
	\end{theorem}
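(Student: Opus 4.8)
The plan is to exhibit, for every $\varepsilon > 0$, an explicit $\varepsilon$-error probabilistic polynomial for $\OR_n$ of degree $O(\log n\cdot \log(1/\varepsilon))$, via a product of random affine/linear forms in the spirit of Valiant--Vazirani ``isolation''. Set $t = \lceil \log_2 n\rceil$ and $r = \Theta(\log(1/\varepsilon))$. For each level $\ell \in \{0,1,\dots,t\}$ and each $j \in [r]$, sample a set $\bm{S}_{\ell,j}\subseteq [n]$ by including each coordinate independently with probability $2^{-\ell}$, and let $\bm{L}_{\ell,j}(x) = \sum_{i\in \bm{S}_{\ell,j}} x_i$, a degree-$1$ polynomial. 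Then define
\[
\bm{P}(x) \;=\; 1 - \prod_{\ell=0}^{t}\prod_{j=1}^{r}\bigl(1 - \bm{L}_{\ell,j}(x)\bigr).
\]
There are $(t+1)r = O(\log n\cdot \log(1/\varepsilon))$ factors, each of degree $1$, so $\deg(\bm{P}) \le (t+1)r = O(\log n\cdot \log(1/\varepsilon))$, which is the claimed bound; it remains only to bound the error.

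Next I would check correctness pointwise. On the all-zeros input every $\bm{L}_{\ell,j}$ evaluates to $0$, so $\bm{P}(0^n) = 1-1 = 0 = \OR_n(0^n)$ with probability $1$. Now fix $x\neq 0^n$, so $\OR_n(x)=1$, and write $k = |x|\ge 1$. Since each $\bm{L}_{\ell,j}(x)$ is a nonnegative integer, the product vanishes, and hence $\bm{P}(x) = 1$, precisely when some factor is zero, i.e. when some $\bm{L}_{\ell,j}(x) = 1$, equivalently when exactly one $1$-coordinate of $x$ lies in $\bm{S}_{\ell,j}$; otherwise $\bm{P}(x)\neq 1 = \OR_n(x)$. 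So it suffices to show that with probability $\ge 1-\varepsilon$ some sampled set isolates a $1$-coordinate of $x$. Let $\ell^\star = \lfloor \log_2 k\rfloor \le t$ and $p = 2^{-\ell^\star}$, so $kp\in[1,2)$. A direct binomial estimate gives, for each $j$,
\[
\Pr\bigl[\,\text{exactly one $1$-coordinate of $x$ lies in }\bm{S}_{\ell^\star,j}\,\bigr] \;=\; kp\,(1-p)^{k-1} \;\ge\; c_0
\]
for an absolute constant $c_0>0$ (for $k=1$ the left side is $1$; for $k\ge 2$ one has $p\le 1/2$, hence $(1-p)^{k-1}\ge e^{-2p(k-1)}\ge e^{-4}$ since $p(k-1)<kp<2$, while $kp\ge 1$). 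As the $r$ sets at level $\ell^\star$ are independent, the probability that none isolates a coordinate of $x$ is at most $(1-c_0)^r$, and the other levels can only help; choosing $r = \lceil \log(1/\varepsilon)/\log(1/(1-c_0))\rceil = O(\log(1/\varepsilon))$ makes this at most $\varepsilon$. Thus $\Pr[\bm{P}(x)\neq \OR_n(x)]\le\varepsilon$ for every $x$, giving $\pdeg_\varepsilon(\OR_n)=O(\log n\cdot\log(1/\varepsilon))$.

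The only substantive point — and where I expect to spend the care — is the isolation estimate: one must check that the level set $\{0,\dots,\lceil\log_2 n\rceil\}$ together with inclusion probabilities $2^{-\ell}$ always provides a level whose inclusion probability is within a factor $2$ of $1/k$, whatever the (nonzero) Hamming weight $k\le n$ is, and then that a $\mathrm{Binomial}(k,\Theta(1/k))$ variable equals $1$ with probability bounded below by a universal constant uniformly in $k$, including the small-$k$ and $k\approx n$ boundary cases. Everything else — the degree count, the exact behaviour on $0^n$, and the amplification to error $\varepsilon$ through the $r$ independent repetitions at each level — is routine. (An even shorter route: first establish the constant-error case $\pdeg_{1/3}(\OR_n) = O(\log n)$ by taking $r = O(1)$ above, and then obtain the general statement from the error-reduction item of Fact~\ref{fac:pdeg}, at the cost of an extra $O(\log(1/\varepsilon))$ factor.)
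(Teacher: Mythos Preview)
Your proposal is correct and is precisely the standard Beigel--Reingold--Spielman / Tarui construction. The paper does not actually give its own proof of Theorem~\ref{thm:BRST}; it simply cites \cite{BRS,Tarui}. That said, exactly the sampling argument you wrote out appears in the paper in the proof of Lemma~\ref{lem:OR-redn}: for each scale $i\in[\log s]$ one samples subsets with inclusion probability $2^{-i}$, and for an input of Hamming weight $t\in(2^{i-1},2^i]$ one computes
\[
\Pr\bigl[\,|\bm{S}^i_j\cap S|=1\,\bigr]=t\cdot 2^{-i}\cdot\bigl(1-2^{-i}\bigr)^{t-1}\ge \tfrac{1}{2}\bigl(1-2^{-i}\bigr)^{2^i-1}\ge \tfrac{1}{2e},
\]
which is the same isolation estimate you derive (with a slightly better constant than your $e^{-4}$, owing to the inequality $(1-1/n)^{n-1}\ge 1/e$ rather than $1-p\ge e^{-2p}$). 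So your approach and the paper's are essentially identical; the only cosmetic difference is that you fold the $r=O(\log(1/\varepsilon))$ repetitions directly into the product polynomial, whereas the paper would phrase the amplification via Fact~\ref{fac:pdeg} (error reduction) --- and you yourself note this alternative at the end.
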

	
	\begin{theorem}[\cite{HS}]
		\label{thm:HS}
		$\pdeg(\OR_n) \geq (\log n)^{1/2-o(1)}.$
	\end{theorem}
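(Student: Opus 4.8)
The plan is to prove the contrapositive by an approximation-theoretic argument: turn a low-degree probabilistic polynomial for $\OR_n$ into a low-degree \emph{univariate} polynomial that is forced to imitate the ``OR step function'' over a surprisingly long range of Hamming weights, which Markov's inequality then forbids. Since $\OR_n$ vanishes only at $\vec 0$, the right hard input distribution is one that keeps $\OR_n$ balanced while being concentrated on low weights; I would take the product distribution $\mu = \mathrm{Ber}((\ln 2)/n)^{\otimes n}$, so that $\Pr_\mu[\OR_n = 0] = (1-(\ln 2)/n)^n \approx \tfrac12$. Given an $\varepsilon$-error probabilistic polynomial $\bm P$ of degree $d$ for $\OR_n$, averaging the bound $\Pr_{\bm P,\, x\sim\mu}[\bm P(x)\ne\OR_n(x)]\le\varepsilon$ over $\bm P$ produces a single polynomial $P$ of degree $\le d$ that agrees with $\OR_n$ on a $(1-\varepsilon)$-fraction of $\mu$. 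Reducing the error first, via the error-reduction fact (Fact~\ref{fac:pdeg}), at a cost of only an $O(\log\log n)$ factor in the degree, I would arrange that $P$ agrees with $\OR_n$ on almost all of each individual weight slice $k$ throughout the range $\{0,1,\dots,L\}$, where $L$ is the largest Hamming weight still carrying non-negligible $\mu$-mass; since $\mu$-weight is essentially $\mathrm{Poisson}(\ln 2)$, this is $L = \Theta(\log n / \log\log n)$ (precisely the largest $k$ with $k! \le \mathrm{poly}(n)$).

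Next I would symmetrize: because $\OR_n$ and $\mu$ are both $S_n$-invariant, the function $q(k):=\mathbb{E}_{b:\,|b|=k}[P(b)]$ is the restriction to the integers of a univariate real polynomial of degree $\le d$, and slice-wise agreement forces $q(0)\approx 0$ and $q(k)\approx 1$ for every integer $k\in\{1,\dots,L\}$. Thus $q$ jumps from near $0$ to near $1$ between the arguments $0$ and $1$ while remaining bounded on the $\Omega(L)$-point grid $\{0,\dots,L\}$. A Coppersmith--Rivlin / Ehlich--Zeller-type estimate (applicable precisely when $\deg q \ll \sqrt L$, the regime we are trying to contradict) upgrades this to boundedness on the whole interval $[0,L]$; Markov's inequality then gives $\max_{[0,L]}|q'| = O((\deg q)^2/L)$, whereas the mean value theorem gives $\max_{[0,L]}|q'| \ge \tfrac12$. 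Hence $(\deg q)^2 = \Omega(L) = \Omega(\log n/\log\log n)$, and undoing the $O(\log\log n)$ error-reduction factor yields $\pdeg(\OR_n) \ge \Omega\!\left(\sqrt{\log n}/(\log\log n)^{3/2}\right) = (\log n)^{1/2 - o(1)}$. (This is in effect a ``coin problem'' degree lower bound: $q$ is being made to distinguish $\mathrm{Ber}(\lambda/n)^{\otimes n}$ from $\mathrm{Ber}(\lambda'/n)^{\otimes n}$.)

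The step I expect to be the genuine obstacle --- and the technical heart of the proof --- is obtaining a \emph{bounded} polynomial in the symmetrization step. A probabilistic polynomial over $\mathbb{R}$ may take arbitrarily large values exactly on the $\varepsilon$-fraction of inputs where it errs, so a priori $P$, and hence $q(k)=\mathbb{E}_{|b|=k}[P(b)]$, is \emph{unbounded} on the bad part of each weight slice, which would destroy the ``$q(k)\approx\OR_n(k)$'' conclusion above. This cannot be fixed cheaply: one cannot simply assume $\OR_n$ has a $\mathrm{polylog}(n)$-bounded probabilistic polynomial of degree $O(\pdeg(\OR_n))$, because a bounded probabilistic polynomial would (after error reduction and taking expectations) be an $\ell_\infty$-approximating polynomial for $\OR_n$, forcing degree $\Omega(\sqrt n)$ rather than $O(\log n)$. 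So the real argument must exploit the specific structure of $\OR_n$ together with the product distribution $\mu$ to control the contribution of the exceptional inputs on each slice directly --- e.g.\ via a careful slice-by-slice truncation/averaging argument, or by establishing a suitable linear-programming duality for the (non-compact) family of bounded-error probabilistic polynomials --- and it is exactly this control that pins the relevant weight range at $\Theta(\log n/\log\log n)$ and hence produces the final $(\log n)^{1/2-o(1)}$ bound. Once that control is in place, the symmetrization and Markov steps are routine.
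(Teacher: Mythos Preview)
The paper does not prove Theorem~\ref{thm:HS}; it is simply quoted from~\cite{HS} and used as a black box throughout (see the Preliminaries section and Remark~\ref{rem:o(1)}). So there is no proof in the paper for your attempt to be compared against.

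For what it is worth, your outline is a reasonable high-level sketch of the Harsha--Srinivasan argument, and you have correctly isolated the genuine obstacle: unboundedness of real-valued probabilistic polynomials on their error set prevents naive symmetrization, and one cannot sidestep this by assuming boundedness (as you note, that would collapse to approximate degree and force $\Omega(\sqrt n)$). The actual proof in~\cite{HS} resolves this essentially along the lines you gesture at in your parenthetical remark, by going through the \emph{coin problem}: one first proves a degree lower bound for polynomials that distinguish two nearby product Bernoulli distributions, where the boundedness issue can be handled directly, and then reduces the $\OR_n$ question to that. Your proposal does not carry out this step, so as written it is an outline rather than a proof --- but the outline points in the right direction, and your final quantitative estimate $\Omega(\sqrt{\log n}/(\log\log n)^{3/2})$ is indeed what~\cite{HS} obtains.
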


	\begin{definition}[Some Complexity Measures of Boolean functions]
		Let $f:\{0,1\}^n\rightarrow \{0,1\}$ be any Boolean function. We use $D(f)$ to denote the depth of the smallest Decision Tree computing $f$.
		
		For $a\in \{0,1\}^n$, we use $s(f,a)$ to denote the number of $b\in \{0,1\}^n$ that can be obtained by flipping a single bit of $a$ and satisfying $f(a) \neq f(b).$ The \emph{Sensitivity} of $f$, denoted $s(f),$ is defined to be the maximum value of $s(f,a)$ as $a$ ranges over  $\{0,1\}^n.$
	\end{definition}
	
	Huang~\cite{Huang} proved the following breakthrough result recently.
	
	\begin{theorem}[Huang's Sensitivity theorem~\cite{Huang}]
		\label{thm:Huang}
		There is an absolute constant $c_0 > 0$ such that for all large enough $n$ and all functions $f:\{0,1\}^n\rightarrow \{0,1\}$, $D(f)\leq s(f)^{c_0}.$
	\end{theorem}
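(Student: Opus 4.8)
The plan is to deduce the stated inequality from two ingredients: the \emph{known} polynomial relationships between decision‑tree depth and degree, and the genuinely new ingredient, Huang's inequality $\deg(f)\le s(f)^2$. Recall that $D(f)\le \poly(\deg(f))$; this follows from the Nisan--Smolensky bounds $\mathrm{bs}(f)=O(\deg(f)^2)$ and $D(f)=O(\deg(f)^2\cdot \mathrm{bs}(f))$, or from Midrijanis' sharper $D(f)=O(\deg(f)^3)$; either way $D(f)\le C\deg(f)^{k}$ for absolute constants $C,k$. Composing with $\deg(f)\le s(f)^2$ gives $D(f)=O(s(f)^{2k})$, which is at most $s(f)^{c_0}$ for a suitable absolute constant $c_0$ and all large enough $n$ (taking $c_0=7$ works). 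So essentially all the work is in proving $\deg(f)\le s(f)^2$.

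For that I would follow the Gotsman--Linial reduction to a purely graph‑theoretic statement about the Boolean hypercube $Q_n$. First reduce to the full‑degree case: if $d=\deg(f)$ and $x^S$ with $|S|=d$ is a top monomial of the multilinear expansion of $f$, restrict every variable outside $S$ to $0$; this yields a function $h$ on $d$ variables with $\deg(h)=d$, and since restrictions do not increase sensitivity it suffices to prove $s(h)\ge\sqrt d$. Passing to $\pm1$ notation, $\deg(h)=d$ means $\hat h([d])\ne 0$, hence $\sum_x h(x)\prod_i x_i\ne 0$, so the set $E=\{x\in\{\pm1\}^d : h(x)=\prod_i x_i\}$ satisfies $|E|\ne 2^{d-1}$; replacing $h$ by $-h$ if necessary (which changes neither $s(h)$ nor $\deg(h)$) we may assume $|E|\ge 2^{d-1}+1$. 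The key observation is that if $x,y\in E$ are adjacent in $Q_d$, say differing in coordinate $i$, then $h(x)=\prod_j x_j=-\prod_j y_j=-h(y)$, so coordinate $i$ is sensitive for $h$ at $x$; hence in the induced subgraph $Q_d[E]$ every vertex has degree at most $s(h)$. Selecting any $2^{d-1}+1$ vertices of $E$ we obtain an induced subgraph $H$ of $Q_d$ with $\Delta(H)\le s(h)$, so it remains to show: \emph{every induced subgraph $H$ of $Q_n$ on $2^{n-1}+1$ vertices has a vertex of degree at least $\sqrt n$.}

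This is the heart of the argument and I would prove it by Huang's spectral trick. Define a signed adjacency matrix of $Q_n$ recursively by $A_1=\left(\begin{smallmatrix}0&1\\1&0\end{smallmatrix}\right)$ and $A_n=\left(\begin{smallmatrix}A_{n-1}&I\\ I&-A_{n-1}\end{smallmatrix}\right)$. A one‑line induction gives $A_n^2=nI$, and $\operatorname{tr}(A_n)=0$; since every eigenvalue of $A_n$ is therefore $\pm\sqrt n$ and they sum to zero, $A_n$ has eigenvalue $\sqrt n$ with multiplicity exactly $2^{n-1}$. Now let $H$ be an induced subgraph on $m=2^{n-1}+1$ vertices and let $A_H$ be the corresponding $m\times m$ principal submatrix of $A_n$; it is a symmetric $\{0,\pm1\}$‑matrix whose nonzero entries sit exactly on the edges of $H$. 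By Cauchy's interlacing theorem, $\lambda_1(A_H)\ge\lambda_{2^{n-1}}(A_n)=\sqrt n$. On the other hand the spectral radius of $A_H$ is at most its maximum absolute row sum, which is $\Delta(H)$ since row $i$ has $\deg_H(i)$ nonzero $\pm1$ entries. Hence $\Delta(H)\ge\lambda_1(A_H)\ge\sqrt n$, as required, and tracing back we get $s(f)\ge\sqrt{\deg(f)}$ and then $D(f)\le s(f)^{c_0}$.

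The only genuinely creative step is discovering the signed adjacency matrix satisfying $A_n^2=nI$; once that identity is in hand, interlacing and the row‑sum bound are routine, and the Gotsman--Linial reduction is elementary bookkeeping (the mild subtleties being the passage to a full‑degree restriction and the sign flip to arrange $|E|>2^{n-1}$). I would also note that $\sqrt n$ is best possible — Chung--F\"uredi--Graham--Seymour construct $(2^{n-1}+1)$‑vertex induced subgraphs of $Q_n$ with maximum degree $\lceil\sqrt n\rceil$ — so there is no slack in the spectral step, and the polynomial blow‑up in passing from $\deg(f)$ to $D(f)$ is exactly why $c_0$ cannot be pushed close to $1$.
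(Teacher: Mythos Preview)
The paper does not prove this theorem at all: it is stated in the Preliminaries as a result of Huang and used as a black box, and the paper even remarks (immediately after the statement) that it could make do with the older polynomial relationship between decision-tree depth and block sensitivity. So there is no ``paper's own proof'' to compare your proposal against.

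That said, your proposal is correct and is essentially Huang's original argument. The Gotsman--Linial reduction (restrict to a top monomial to get a full-degree function $h$ on $d$ variables, pass to $\pm1$, and observe that edges inside $E=\{x:h(x)=\prod_i x_i\}$ are sensitive edges) is carried out correctly, including the sign flip to arrange $|E|\ge 2^{d-1}+1$. The spectral step is also right: $A_n^2=nI$ by a one-line induction, so the eigenvalues are $\pm\sqrt n$ each with multiplicity $2^{n-1}$; Cauchy interlacing for a principal submatrix of order $m=2^{n-1}+1$ gives $\lambda_1(A_H)\ge\lambda_{2^n-m+1}(A_n)=\lambda_{2^{n-1}}(A_n)=\sqrt n$; and the $\ell_\infty$ operator-norm bound gives $\lambda_1(A_H)\le\Delta(H)$ since every row of $A_H$ has exactly $\deg_H(i)$ nonzero $\pm1$ entries. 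Composing $\deg(f)\le s(f)^2$ with Midrijanis' $D(f)\le\deg(f)^3$ (or the Nisan--Smolensky route you mention) yields $D(f)\le s(f)^{c_0}$ for an absolute $c_0$, as claimed. One cosmetic remark: you do not actually need to pass to a subset of $E$ of size exactly $2^{d-1}+1$; interlacing applied directly to the $|E|\times|E|$ principal submatrix gives the same bound, since $2^n-|E|+1\le 2^{n-1}$.
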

	
	Strictly speaking, we do not need to use Huang's Sensitivity theorem in what follows as we could also make do with a polynomial relationship between the decision tree height and the \emph{block sensitivity}\footnote{The Block sensitivity of a function $f:\{0,1\}^n\rightarrow \{0,1\}$ is defined as follows. Given $a\in \{0,1\}^n$, define  $bs(f,a)$ to be the maximum number of pairwise disjoint  sets $B_1,\ldots,B_t\subseteq [n]$ such that flipping all the bits indexed by any $B_i$ in $a$ results in an input $b^{(i)}$ such that $f(a)\neq f(b^{(i)})$. Then, the block sensitivity of $f$ is defined to be the maximum value of $bs(f,a)$ over all inputs $a\in \{0,1\}^n$.} of $f$, which has been known for a long time~\cite{Nisan-bs}. However, it is notationally simpler to work with sensitivity.

	\section{The Lower Bound: Proof of Theorem~\ref{thm:lbd}}
	
	The proof is made up of two lower bounds. We first prove a lower bound on $\pdeg(f)$ for any function $f$ that has large sensitivity $s(f)$; this is by a suitable reduction from the case of the $\OR$ function. We then prove a lower bound on $\pdeg(f)$ for any function that depends on all its variables but has small sensitivity; this is by a suitable reduction from a kind of Addressing function. Optimizing over the parameters of the lower bounds will yield the lower bound of the theorem statement.
	
	Throughout this section, we assume that $\pdeg(\OR_n) \geq (\log n)^{c-o(1)}$ for all large enough $n$.
	
	\subsection{The case of large sensitivity}
	
	The main result of this section is the following lower bound on the probabilistic degrees of Boolean functions with large sensitivity.
	
	\begin{lemma}
		\label{lemma:lbd-high-sens}
		Let $f:\{0,1\}^n\rightarrow \{0,1\}$ be any Boolean function that has sensitivity $s$. Then, $\pdeg(f) \geq (\log s)^{c - o(1)}.$
	\end{lemma}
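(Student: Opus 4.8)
The plan is to prove the contrapositive‐style bound $\pdeg(\OR_s)\le \pdeg(f)\cdot(\log\log s)^{O(1)}$; combined with the standing hypothesis $\pdeg(\OR_s)\ge(\log s)^{c-o(1)}$ this yields the lemma (for $s$ below any fixed constant the claimed bound is trivial). I would first normalize the function: by Fact~\ref{fac:pdeg}(2), replacing $f$ by $f(\cdot\oplus a)$ and, if necessary, by its complement $1-f$ (both operations preserve probabilistic degree), we may assume that $0^n$ is a point of maximum sensitivity, that $f(0^n)=0$, and that there is a set $S\subseteq[n]$ with $|S|=s$ such that $f(e_i)=1$ for every $i\in S$, where $e_i\in\{0,1\}^n$ is the $i$th standard basis vector. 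The idea, going back to the sampling argument of Beigel--Reingold--Spielman and Tarui, is to compute $\OR_s$ on a fresh tuple of variables $w=(w_i)_{i\in S}$ by feeding $f$ random inputs supported inside $S$: if a single coordinate $i\in S$ with $w_i=1$ can be \emph{isolated}, then $f$ at the corresponding input $e_i$ outputs $1$, while if $w\equiv 0$ every such input equals $0^n$ and $f$ outputs $0$.

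Concretely, fix a large absolute constant $t$, and for each scale $j\in\{0,1,\dots,\lceil\log_2 s\rceil\}$ and each $\ell\in[t]$ let $\bm R_j^{(\ell)}\subseteq S$ be a random set containing each element independently with probability $2^{-j}$. Define $\bm x^{(j,\ell)}(w)\in\{0,1\}^n$ by setting its $i$th coordinate to $w_i$ if $i\in\bm R_j^{(\ell)}$ and to $0$ otherwise (coordinates outside $S$ are always $0$); there are $N:=t(\lceil\log_2 s\rceil+1)=O(\log s)$ such inputs. A short binomial estimate shows that whenever $\mathrm{Supp}(w)\neq\emptyset$, choosing the scale $j^*$ with $2^{j^*}\approx|\mathrm{Supp}(w)|$ gives $\Pr[\,|\bm R_{j^*}^{(\ell)}\cap\mathrm{Supp}(w)|=1\,]\ge c_1$ for an absolute constant $c_1>0$ independent of $|\mathrm{Supp}(w)|$; hence with probability at least $1-(1-c_1)^t$ some copy $\ell$ isolates a single coordinate $i^*\in\mathrm{Supp}(w)$, so that $\bm x^{(j^*,\ell)}=e_{i^*}$ and $f(\bm x^{(j^*,\ell)})=1$. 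Writing $\bm G(w):=\OR_N\big((f(\bm x^{(j,\ell)}(w)))_{j,\ell}\big)$, this gives $\Pr[\bm G(w)\neq\OR_s(w)]\le(1-c_1)^t$ for every fixed $w$: if $\OR_s(w)=0$ all inputs equal $0^n$ and $\bm G(w)=f(0^n)=0$, and if $\OR_s(w)=1$ the isolation event forces $\bm G(w)=1$.

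Finally I would convert $\bm G$ into a probabilistic polynomial. Choose $t$ with $(1-c_1)^t\le 1/9$. Let $\bm P^{(j,\ell)}$ be independent copies of a $\delta$-error probabilistic polynomial for $f$ with $\delta:=1/(9N)$, obtained from a $1/3$-error one via Fact~\ref{fac:pdeg}(3) at a cost of a factor $O(\log(1/\delta))=O(\log\log s)$ in degree, and let $\bm Q$ be a $(1/9)$-error probabilistic polynomial for $\OR_N$ of degree $O(\log N)=O(\log\log s)$ (Theorem~\ref{thm:BRST}), all mutually independent and independent of the $\bm R_j^{(\ell)}$. Conditioned on the $\bm R_j^{(\ell)}$, every coordinate of $\bm x^{(j,\ell)}(w)$ is a constant or a single variable, so $\bm P^{(j,\ell)}(\bm x^{(j,\ell)}(w))$ has degree at most $\pdeg_\delta(f)$ in $w$; by Fact~\ref{fac:pdeg}(4) the polynomial $\bm R:=\bm Q\big((\bm P^{(j,\ell)}(\bm x^{(j,\ell)}(w)))_{j,\ell}\big)$ agrees with $\bm G(w)$ up to error $1/9+N\delta=2/9$ and has degree at most $\pdeg_\delta(f)\cdot\pdeg_{1/9}(\OR_N)=\pdeg(f)\cdot(\log\log s)^{O(1)}$. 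A union bound with the previous paragraph then shows $\bm R$ is a $(1/3)$-error probabilistic polynomial for $\OR_s$, so $\pdeg(\OR_s)\le\pdeg(f)\cdot(\log\log s)^{O(1)}$, as required. The one step needing real care is the multi-scale sampling: one must check that the isolation probability is bounded below by an absolute constant uniformly over the Hamming weight of $w$ (which is why scales up to $\log_2 s$ and a constant number of repetitions per scale are used); everything else is error-budget bookkeeping through Facts~\ref{fac:pdeg}(3)--(4).
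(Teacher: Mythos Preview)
Your proof is correct and follows essentially the same approach as the paper: normalize so that $f(0^n)=0$ and $f(e_i)=1$ for $i\in S$, use the Beigel--Reingold--Spielman/Tarui multi-scale random subset sampling to isolate a single $1$-coordinate with constant probability (yielding a probabilistic reduction from $\OR_s$ to an $\OR_{O(\log s)}$ of restrictions of $f$), and then compose with low-error probabilistic polynomials to obtain $\pdeg(\OR_s)\le\pdeg(f)\cdot O((\log\log s)^2)$. The only cosmetic difference is that the paper packages the sampling step as a separate reduction lemma (Lemma~\ref{lem:OR-redn}) applied to $g(x)=f(x0^{n-s})$, whereas you carry it out inline.
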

	
	The above lemma is proved via a probabilistic reduction from the $\OR$ function on $s$ variables to the function $f$. This is captured by the following lemma, which shows how a function that has large sensitivity can be used to obtain a probabilistic representation of a large copy of the $\OR$ function. 
	
	Recall from above that a Boolean function $h:\{0,1\}^s\rightarrow \{0,1\}$ is a restriction of a Boolean function $g:\{0,1\}^s\rightarrow \{0,1\}$ if $h$ can be obtained by setting  some inputs of $g$ to constants. Though $h$ no longer depends on the variables that are set to constants, here we still treat $h$ as a function on all $s$ variables.
	
	\begin{lemma}
		\label{lem:OR-redn}
		Let $g:\{0,1\}^s \rightarrow \{0,1\}$ be any Boolean function such that $g(0^s)  = 0$ and $g(x) = 1$ for any $x$ of Hamming weight $1$. Then, there exist $\ell = O(\log s)$ independent random restrictions $\bm{g}_1,\ldots,\bm{g}_\ell$ of $g$ such that for any $a\in \{0,1\}^s$,
		\[
		\prob{\bm{g}_1,\ldots,\bm{g}_\ell}{\OR_\ell(\bm{g}_1(a),\ldots,\bm{g}_\ell(a)) \neq \OR_s(a)} \leq\frac{1}{10}.
		\]
	\end{lemma}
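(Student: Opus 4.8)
The plan is to exhibit $\OR_s$ as the $\OR$ of $O(\log s)$ independent random restrictions of $g$, via a sampling argument in the spirit of~\cite{BRS,Tarui}. The intuition is this: for a nonzero input $a$ of Hamming weight $w$, if I zero out a uniformly random set of coordinates chosen so that roughly $1/w$ of them survive, then with constant probability exactly one surviving coordinate lies in $\Supp(a)$; the surviving point then has Hamming weight $1$, and so $g$ outputs $1$ on it by hypothesis. Since $w$ is not known in advance and can be anything in $[s]$, I would use a dyadic scale of densities, so that whatever $w$ is, some density in the scale lies within a factor $2$ of $1/w$.

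Concretely, put $T = \lceil \log_2 s\rceil$ and let $m$ be a sufficiently large absolute constant. Take $\ell = m(T+1) = O(\log s)$ restrictions, grouped into $T+1$ blocks indexed by $t \in \{0,1,\ldots,T\}$ with $m$ restrictions per block, all mutually independent. A \emph{block-$t$ sample} is the random restriction of $g$ in which, independently for each coordinate $j \in [s]$, $j$ is left free (value $*$) with probability $p_t := 2^{-t}$ and fixed to $0$ otherwise; write $\bm S$ for the resulting free set. Writing $a|_{\bm S}$ for the point agreeing with $a$ on $\bm S$ and equal to $0$ off $\bm S$, a block-$t$ sample $\bm g$ satisfies $\bm g(a) = g(a|_{\bm S})$, and in particular $\bm g(a) = 1$ whenever $|\bm S \cap \Supp(a)| = 1$.

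Now I would check correctness at a fixed $a$. If $a = 0^s$, then $a|_{\bm S} = 0^s$ for every sample, so every $\bm g_i(a) = g(0^s) = 0$ and the overall $\OR$ is $0 = \OR_s(0^s)$ with probability $1$. If $a \ne 0^s$ has weight $w \in [1,s]$, set $t^* := \lceil \log_2 w\rceil \in \{0,\ldots,T\}$, so that $1/(2w) \le p_{t^*} \le 1/w$. For a block-$t^*$ sample, $|\bm S \cap \Supp(a)|$ is $\mathrm{Binomial}(w,p_{t^*})$, hence
\[
\prob{\bm S}{\bm g(a) = 1} \;\ge\; w\,p_{t^*}(1-p_{t^*})^{w-1} \;\ge\; \tfrac12\,(1-1/w)^{w-1} \;\ge\; \tfrac1{2e}.
\]
As the $m$ block-$t^*$ samples are independent, the chance they all evaluate to $0$ at $a$ is at most $(1-1/(2e))^m \le 1/10$ once $m$ is a large enough constant; and whenever one of them is $1$, the overall $\OR$ equals $1 = \OR_s(a)$. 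Thus the error is at most $1/10$ for every $a$, as required.

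I do not anticipate a real obstacle: this is a direct sampling computation. The one point to handle with care is the uniform lower bound $w\,p(1-p)^{w-1} = \Omega(1)$ over all weights $w$, which is precisely what choosing $p$ from the dyadic scale $\{2^{-t}\}_{t=0}^{T}$ provides (some $p_{t^*}$ is always between $1/(2w)$ and $1/w$), together with the elementary inequality $(1-1/w)^{w-1}\ge 1/e$. The only edge case is $w = 1$, where $t^* = 0$, $p_0 = 1$, and $\bm g(a) = g(a) = 1$ deterministically.
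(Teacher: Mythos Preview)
Your proof is correct and essentially identical to the paper's: both use the dyadic family of random restrictions that keep each coordinate alive with probability $2^{-t}$, take $O(1)$ independent copies at each of $O(\log s)$ scales, and argue that at the scale matching the weight of $a$ the event $|\bm S\cap \Supp(a)|=1$ has probability at least $1/(2e)$. The only cosmetic difference is that you include the $t=0$ level and explicitly treat the $w=1$ edge case, while the paper's indexing starts at $i=1$.
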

	
	We interpret the random function $\OR_\ell(\bm{g}_1(a),\ldots,\bm{g}_\ell(a))$ as a probabilistic representation of the $\OR_s$ function. The reader may be confused by the fact that the probabilistic representation itself uses an $\OR$ function; however, note that this $\OR$ function is defined on $\ell \ll s$ variables and consequently is a much `simpler' function (in particular, for us, what is relevant is that $\pdeg(\OR_\ell) = O(\log \ell)$~\cite{BRS,Tarui} which is much smaller than $\log s$) . 
	
	The proof of Lemma~\ref{lem:OR-redn} is closely related to the argument for constructing a probabilistic polynomial for the OR function from~\cite{BRS, Tarui}. The observation here is that a similar argument can be used to give a probabilistic reduction from $\OR_s$ to any function $g$ as above.
	
	Assuming Lemma~\ref{lem:OR-redn} for now, we first finish the proof of Lemma~\ref{lemma:lbd-high-sens}.
	
	\begin{proof}[Proof of Lemma~\ref{lemma:lbd-high-sens}]
		We know that $f$ has some input of sensitivity $s$.  Then we note that we may assume \(f(0^n)=0\) and \(f(0^{j-1}10^{n-j})=1\) for \(j\in[s]\).  For let \(a\in\{0,1\}^n\) such that \(s(f,a)=s\).  If \(f(a)=1\), we may replace \(f\) by \(1-f\).  (Obviously, \(\pdeg_\varepsilon(f)=\pdeg_\varepsilon(1-f)\), for all \(\varepsilon\ge0\).)  So we may assume \(f(a)=0\).  Now by permuting coordinates if required, we may assume that \(f(\tilde{a}^{(j)})=1\), where \(\tilde{a}^{(j)}\coloneqq(a_1,\ldots,a_{j-1},1-a_j,a_{j+1},\ldots,a_n)\) for all \(j\in[s]\).  Further, if \(a\ne0^n\), we may replace \(f\) by \(f'\), defined as \(f'(x)=f(x\oplus a),\,x\in\{0,1\}^n\).  By Fact~\ref{fac:pdeg} Item 2, \(\pdeg_\varepsilon(f)=\pdeg_\varepsilon(f')\), for all \(\varepsilon\ge0\).  Clearly, we have \(f'(0^n)=0\) and \(f'(0^{j-1}10^{n-j})=1\), for all \(j\in[s]\).
		
		So now, by assumption, we have \(f(0^n)=0\) and \(f(0^{j-1}10^{n-j})=1\) for \(j\in[s]\).  Define \(g:\{0,1\}^s\to\{0,1\}\) as \(g(x)=f(x0^{n-s})\).  Then $g$ satisfies the hypotheses of Lemma~\ref{lem:OR-redn}.  Hence, by Lemma~\ref{lem:OR-redn}, there exist \(\ell=O(\log s)\) random restrictions \(\bm{g}_1,\ldots,\bm{g}_\ell\) of \(g\) such that
		\begin{equation}
			\label{eq:prob-rep}
			\prob{\bm{g}_1,\ldots,\bm{g}_\ell}{\OR_\ell(\bm{g}_1(x),\ldots,\bm{g}_\ell(x))\ne\OR_s(x)}\le\frac{1}{10},\quad\text{for all }x\in\{0,1\}^s.
		\end{equation}
		We use the above representation to devise a probabilistic polynomial for $\OR_s$. 
		
		Let \(\bm{O}\) be any \((1/10)\)-error probabilistic polynomial for \(\OR_\ell\) and $\bm{G}_1,\ldots,\bm{G}_\ell$ be any \((1/10\ell)\)-error probabilistic polynomials for \(\bm{g}_1,\ldots,\bm{g}_\ell\) respectively. Then, by Fact~\ref{fac:pdeg} and (\ref{eq:prob-rep}), \(\bm{O}(\bm{G}_1,\ldots,\bm{G}_\ell)\) is a \((1/3)\)-error probabilistic polynomial for \(\OR_s\).
		
		Note that by Theorem~\ref{thm:BRST}, we can choose $\bm{O}$ to have degree at most $O(\log \ell).$ Further, by Fact~\ref{fac:pdeg}, we have $\pdeg(\bm{g}_i)\leq \pdeg(g) \leq \pdeg(f)$ for each $i\in [\ell]$. In particular, this implies that we can choose $\bm{G}_i$ to have degree $O(\pdeg(f)\cdot\log \ell)$ for each $i\in [\ell]$. This yields
		\[
		\pdeg(\OR_s) \leq \pdeg(f)\cdot O(\log \ell)^2 = \pdeg(f)\cdot O( (\log \log s)^2) = \pdeg(f) \cdot (\log s)^{o(1)}.
		\]
		As $\pdeg(\OR_s) \geq (\log s)^{c-o(1)}$ by assumption, we get the desired lower bound on $\pdeg(f).$
	\end{proof}
	
	It remains to prove Lemma~\ref{lem:OR-redn}, which we do now. 
	
	\begin{proof}[Proof of Lemma~\ref{lem:OR-redn}]
		We will only use restrictions $g'$ of $g$ obtained by setting some inputs of $g$ to $0$. The basic observation~\cite{BRS,Tarui} is the following. For any such restriction $g' = g_\rho$, the function $g'$ always agrees with $\OR_s$ at the all-zero input. Moreover, if $a$ is non-zero and has weight $t > 0$, then $g'(a) = \OR_s(a) = 1$ as long as exactly $t-1$ of the variables that are set to $1$ in $a$ are fixed to $0$ by $\rho$ (this follows from the fact that $g$ accepts any input of weight exactly $1$). While we cannot always choose a \emph{single} restriction that does this for all possible $a$, it is possible to choose a \emph{small number of restrictions} randomly such that for each non-zero $a$, at least one of them is guaranteed to work with high probability. We now see the details.
		
		%To get this kind of restriction, we construct $g'$ randomly.
		
		For $i\in [\log s],$ let $\mc{D}_i$ be the distribution over subsets of $[s]$ where we pick each element independently to be in the set with probability $2^{-i}.$ For a (constant) parameter $p$ to be chosen later, let $\bm{S}_1^i,\ldots,\bm{S}_p^i$ be independent random subsets picked from distribution $\mc{D}_i.$ Each such set $\bm{S}_j^i$ is associated with the restriction $\bm{\rho}_j^i$ where each variable is set to $0$ if it does not belong to $\bm{S}_j^i$, and left alive (i.e. set to $*$) otherwise.  Note that $\bm{g}_j^i:= g_{\bm{\rho}_j^i}$ is a random restriction of $g$. Also observe that the total number of such restrictions is $\ell := p\log s = O(\log s).$ The final probabilistic representation is the $\OR$ of all these $\bm{g}_j^i$s.
		
		We now prove correctness. Consider any $a\in \{0,1\}^s$. The case when $a = 0^s$ is easy, as each $\bm{g}_j^i$ is obtained by setting some inputs of $g$ to $0$ and hence $\bm{g}_j^i (a) = 0$ with probability $1$.  The same is therefore true for the OR of these functions.
		
		Now assume that $a\neq 0$. Thus $|a| = t \in [s].$ Fix $i\in [\log s]$ such that $t\in (2^{i-1},2^i]$. We will show that, with probability at least $0.9$, some $\bm{g}_j^i$ evaluates to $1$. This will finish the proof.
		
		To see this, let $S\subseteq [s]$ be the set of coordinates where $a$ takes value $1$. Note that $\bm{g}_j^i(a) = g(\bm{b}_j^i )$ where $\bm{b}_j^i$ denotes the indicator vector of $\bm{S}_j^i\cap S$. As $g(b) = 1$ for any input $b$ of weight $1$, we see that $\bm{g}_j^i(a) = 1$ if $|\bm{S}_j^i \cap S| = 1$ . Hence, we have
		\begin{equation}
			\label{eq:OR-redn}
			\prob{\bm{g}_1^i,\ldots,\bm{g}_p^i}{\bm{g}_1^i(a) =\cdots = \bm{g}_p^i(a) = 0} \leq \mathop{\mathrm{Pr}}_{\bm{S}_1^i,\ldots,\bm{S}_p^i}\bigg[\bigwedge_{j=1}^p |\bm{S}_j^i\cap S| \neq 1\bigg] = \prod_{j=1}^p \prob{\bm{S}_j^i}{|\bm{S}_j^i\cap S| \neq 1},
		\end{equation} 
		where the last equality follows from the independence of the $\bm{S}_j^i$s.
		
		Finally, note that for any $j$,
		\begin{align*}
			\prob{\bm{S}_j^i}{|\bm{S}_j^i \cap S|=1} = \sum_{k\in S} \mathop{\mathrm{Pr}}_{\bm{S}_j^i}\bigg[k\in \bm{S}_j^i \wedge \bigwedge_{k'\in S\setminus k} k'\not\in \bm{S}_j^i\bigg]= t \cdot \frac{1}{2^i}\cdot \left(1-\frac{1}{2^i}\right)^{t-1} \geq \frac{1}{2}\cdot\left(1-\frac{1}{2^i}\right)^{2^i-1} \geq \frac{1}{2e},
		\end{align*}
		where the first inequality follows from the fact that $t\in (2^{i-1},2^i]$ and the second from the standard fact that $(1-1/n)^{n-1} \geq 1/e.$ Plugging the above into (\ref{eq:OR-redn}), we get 
		\[
		\prob{\bm{g}_1^i,\ldots,\bm{g}_p^i}{\bm{g}_1^i(a) =\cdots = \bm{g}_p^i(a) = 0}  \leq \left(1-\frac{1}{2e}\right)^p \leq \frac{1}{10},
		\]
		for a large enough constant $p$. In particular, for this $p$, the probability that $\OR_\ell(\bm{g}^i_j:i\in [\ell], j\in [p])$ evaluates to $0$ is at most $1/10$, completing the proof of the lemma.
	\end{proof}

	\subsection{The case of small sensitivity}
	
	We prove the following lemma.
	
	\begin{lemma}
		\label{lem:lbd-low-sens}
		Let $f:\{0,1\}^n\rightarrow \{0,1\}$ be a function of sensitivity at most $s$ that depends on all its $n$ variables. Then, we have $\pdeg(f) = \Omega\left(\frac{\log(n/s^{O(1)})}{\log s}\right)$.
	\end{lemma}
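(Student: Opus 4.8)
The plan is to locate, inside $f$, a copy of a random ``addressing-type'' function on only $\poly(s)$ variables that nonetheless carries $\Omega(n/\poly(s))$ independent uniform random bits, and then to invoke a counting lower bound for such random functions. So first I would establish the following auxiliary statement (by a dimension count): if $\bm{F}:\{0,1\}^r\to\{0,1\}$ is a random function taking independent uniformly random values at some fixed set of $M$ distinct inputs, then $\pdeg(\bm{F}) = \Omega(\log M/\log r)$ with probability $1-o(1)$. The proof idea: if $\pdeg(\bm{F})\le D$ then (averaging over the support of the probabilistic polynomial) some fixed degree-$D$ polynomial agrees with $\bm{F}$ at $\ge 2M/3$ of the $M$ special inputs; the set of $\{0,1\}$-patterns realizable on those $M$ coordinates by degree-$\le D$ polynomials lies in a linear subspace of $\mathbb{R}^M$ of dimension $\binom{r}{\le D}$, hence has size $\le 2^{\binom{r}{\le D}}$, so the number of patterns within Hamming distance $M/3$ of a realizable one is $\le 2^{\binom{r}{\le D}+0.92M}=o(2^M)$ once $D\ll\log M/\log r$. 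Everything else is a reduction: $f$ is transformed into such an $\bm{F}$ using only a projection followed by a restriction, neither of which increases probabilistic degree (Fact~\ref{fac:pdeg}, Item~2).

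\textbf{The structure lemma and the partition.} Apply Huang's theorem (Theorem~\ref{thm:Huang}) to get a decision tree $T$ for $f$ of depth $d\le s^{c_0}=\poly(s)$ (WLOG no variable is queried twice on a path). Since $f$ depends on $x_i$, there is a partial assignment $\sigma_i$ to the other coordinates with $f_{\sigma_i}$ equal to $x_i$ or $\overline{x_i}$; running $T$ on $(\sigma_i,x_i=0)$ and $(\sigma_i,x_i=1)$ and locating where these two paths first diverge isolates a node $v_i$ querying $x_i$ together with disjoint sets $A_i,B_i\subseteq[n]\setminus\{i\}$ with $|A_i\cup B_i|\le 2d$ (the queries on the root-to-$v_i$ path, and those on the two paths descending from $v_i$'s children under $\sigma_i$) and partial assignments $\alpha_i$ on $A_i$, $\beta_i$ on $B_i$ such that $f$ restricted by $\alpha_i,\beta_i$ equals the single literal $x_i\oplus c_i$. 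Now color each coordinate independently, placing it in $Z$ (``addressed'') with probability $\Theta(1/d)$ and in $Y$ (``addressing'') otherwise; call $i$ \emph{useful} if $i\in Z$ and $A_i\cup B_i\subseteq Y$, so $\Pr[i\text{ useful}]\ge\Theta(1/d)\cdot(1-\Theta(1/d))^{2d}=\Omega(1/d)$. Hence some fixed coloring yields a set $Z'$ of $\Omega(n/d)$ useful coordinates.

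\textbf{The projection and conclusion.} For each useful $z$, since $f$ restricted by $\alpha_z,\beta_z$ depends on nothing but $x_z$, setting all of $Z$ to arbitrary bits $b$ forces $f$ to output $b_z\oplus c_z$ at the $Y$-input extending $(\alpha_z,\beta_z)$. To cut the number of addressing variables to $\poly(s)$, apply a uniformly random projection $\nu$ of $Y$ onto a set $Y'$ of $r'=\Theta(d^2)$ variables (keeping $Z$ untouched). For a fixed useful $z$, the Birthday bound gives that $\nu$ is injective on $A_z\cup B_z$ with probability $\ge 1/2$ (as $|A_z\cup B_z|\le 2d\ll\sqrt{r'}$), and when it is there is $\delta_z\in\{0,1\}^{Y'}$ with $\delta_z\circ\nu$ consistent with $(\alpha_z,\beta_z)$; so after setting $Z\leftarrow\bm{b}$ for a uniformly random $\bm{b}$, the function $\bm{F}':=(f|_\nu)_{Z\leftarrow\bm{b}}$ takes value $\bm{b}_z\oplus c_z$ at $\delta_z$. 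Thus a fixed $\nu$ retains a set $Z''$ of $M=\Omega(n/d)=\Omega(n/s^{O(1)})$ useful coordinates; the inputs $\{\delta_z:z\in Z''\}$ are pairwise distinct (if $\delta_z=\delta_{z'}$ then $\bm{F}'$ at that input would equal $\bm{b}_z\oplus c_z$ and $\bm{b}_{z'}\oplus c_{z'}$ for all $\bm{b}$, impossible), and $\bm{F}'$ takes the jointly independent uniform values $(\bm{b}_z\oplus c_z)_{z\in Z''}$ there. The auxiliary lemma then gives $\pdeg(\bm{F}')=\Omega(\log M/\log r')=\Omega(\log(n/s^{O(1)})/\log s)$ with high probability, so some $\bm{b}$ witnesses this; as $\bm{F}'$ is a projection-then-restriction of $f$, the same bound lower-bounds $\pdeg(f)$.

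\textbf{Main obstacle.} The technical heart is the bookkeeping in the reduction: choosing the addressing/addressed partition and the projection so that (i) each retained addressed variable genuinely controls the value of the reduced function at some input over the reduced variable set, (ii) those inputs stay distinct and the controlled values stay jointly independent, all while losing only a $\poly(s)$ factor in $M$ and keeping the number of surviving addressing variables at $\poly(s)$. A clean structure lemma — the extraction of $(A_i,B_i,\alpha_i,\beta_i)$ from $T$ — is precisely what lets the two random steps (the coloring and the projection) go through with only Birthday-paradox-level slack; the random-function counting lemma, though elementary, is the other load-bearing ingredient.
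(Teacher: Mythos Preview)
Your plan matches the paper's almost step for step: use Huang's theorem to get a depth-$d=\poly(s)$ decision tree, extract for each variable a pair of root-to-leaf paths diverging at that variable and agreeing elsewhere, thin the set of ``addressed'' variables so that their path-variables avoid the addressed set, randomly project the remaining variables down to $\Theta(d^2)$ coordinates via the Birthday bound, and finish with the random-function counting lemma. The only cosmetic differences are that the paper selects the addressed set deterministically via Tur\'an's theorem on the conflict graph (you use a random $2$-coloring with bias $\Theta(1/d)$), and that it packages the intermediate object as an explicitly defined $(r,t)$-\emph{pseudoaddressing function} before invoking the random-function bound.

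One real but easily repaired slip in your sketch of the auxiliary lemma: the degree-$D$ polynomial $P$ that agrees with $\bm{F}$ on $\ge 2M/3$ of the special inputs need not be Boolean on the rest of $X$, so $\bm{F}|_X$ is not within Hamming distance $M/3$ of a point of $V\cap\{0,1\}^M$, and the bound ``$\le 2^{\binom{r}{\le D}+0.92M}$'' does not follow as written. The correct count first fixes the agreement set $S$ (at most $2^{H(\varepsilon)M}$ choices), then bounds $|V|_S\cap\{0,1\}^{|S|}|\le 2^{\dim V}$, then the values on $X\setminus S$ (at most $2^{\varepsilon M}$ choices), yielding $2^{(H(\varepsilon)+\varepsilon)M+\binom{r}{\le D}}$ bad functions; since $H(1/3)+1/3>1$ this is vacuous at error $1/3$. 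The paper avoids this by proving the lemma for $\pdeg_{1/10}$ (where $H(1/10)+1/10<1$), bounding the polynomial count via the bit complexity of a rational solution to the interpolation linear system, and then transferring the bound to $\pdeg_{1/3}$ by error reduction.
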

	
	The proof of the lemma is in two steps. In the first step, we use a counting argument to prove a lower bound on the probabilistic degrees of random functions $\bm{F}:\{0,1\}^m \rightarrow \{0,1\}$ which are chosen from a distribution such that for a large subset $X\subseteq \{0,1\}^m$, the random variables $\{\bm{F}(x)\ |\ x\in X\}$ are independently and uniformly chosen random bits. In the second step, we show how any $f$ as in the statement of Lemma~\ref{lem:lbd-low-sens} can be randomly restricted to a random function $\bm{F}$ where the lower bound for random functions applies. 
	
	We now state the lower bound for random functions and use it to prove Lemma~\ref{lem:lbd-low-sens}. The lower bound for random functions uses  fairly standard ideas and is proved in the appendix (Section~\ref{sec:app}).
	
	\begin{lemma}[Random function lower bound]
		\label{lem:rand-fn}
		The following holds for positive integer parameters $m, M$ and $d$ such that $M > m^{10d}$. Let $\bm{F}:\{0,1\}^m\rightarrow \{0,1\}$ be a random function such that for some $X\subseteq \{0,1\}^m$ with $|X| = M$, the random variables $(\bm{F}(x))_{x\in X}$ are independent and uniformly distributed random bits. Then, we have
		\[
		\prob{\bm{F}}{\pdeg_{1/10}(\bm{F}) \leq d} < \frac{1}{10}.
		\]
	\end{lemma}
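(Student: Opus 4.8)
The plan is a counting (union-bound) argument. I will show that with probability more than $9/10$ over $\bm{F}$, no single polynomial of degree $\le d$ agrees with $\bm{F}$ on even a $(9/10)$-fraction of $X$; since a $(1/10)$-error probabilistic polynomial of degree $\le d$ would, by averaging, produce such a polynomial, this is enough. Concretely, fix an outcome $F$ of $\bm{F}$ with $\pdeg_{1/10}(F) \le d$, witnessed by a random degree-$\le d$ polynomial $\bm{P}$ with $\prob{\bm{P}}{\bm{P}(x) \ne F(x)} \le 1/10$ for all $x \in X$. Linearity of expectation gives $\avg{\bm{P}}{|\{x \in X : \bm{P}(x) \ne F(x)\}|} \le M/10$, so some fixed polynomial $P$ of degree $\le d$ satisfies $P(x) = F(x)$ --- and hence $P(x) \in \{0,1\}$ --- for at least $9M/10$ of the inputs $x \in X$. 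So it suffices to bound the probability over $\bm{F}$ that some degree-$\le d$ polynomial agrees with $\bm{F}$ at $\ge 9M/10$ points of $X$.

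Next I would bound the number of distinct ``behaviours'' that degree-$\le d$ polynomials exhibit on $X$. We may take $P$ multilinear, so $P$ lies in the space $V$ of multilinear polynomials of degree $\le d$, of dimension $D = \sum_{i=0}^{d}\binom{m}{i} = m^{O(d)}$. Define the behaviour of $P \in V$ to be the string $\beta(P) \in \{0,1,\star\}^X$ with $\beta(P)_x = P(x)$ if $P(x) \in \{0,1\}$ and $\beta(P)_x = \star$ otherwise. Then $\beta(P)$ is determined by which of the $2M$ affine functionals $P \mapsto P(x)$ and $P \mapsto P(x) - 1$ ($x \in X$) vanish at $P$, and with which sign; so by a standard bound on the number of sign/zero patterns of $2M$ affine forms on $\R^D$ (Warren's theorem / a Milnor--Thom-type estimate, or Sauer--Shelah applied to the $O(D)$-dimensional VC class of thresholds of $V$), the number of realizable behaviours is at most $M^{O(D)}$. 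Since a degree-$\le d$ polynomial agreeing with $F$ on $Y \subseteq X$ has $\beta(P)_x = F(x) \ne \star$ for all $x \in Y$, the event from the first paragraph is contained in the union, over the $\le M^{O(D)}$ realizable behaviours $\beta$, of the events ``$\bm{F}$ agrees with $\beta$ on $\ge 9M/10$ points of $X$''.

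Then I would apply a Chernoff bound to each behaviour and take a union bound. Fix $\beta$ and put $Z = \{x \in X : \beta_x \ne \star\}$, so $|Z| \le M$; because the bits $(\bm{F}(x))_{x \in X}$ are independent and unbiased, $\#\{x \in Z : \bm{F}(x) = \beta_x\}$ is a sum of $|Z|$ independent fair coins, so Hoeffding's inequality bounds the probability it reaches $9M/10$ by $2^{-M/4}$. Summing over behaviours,
\[
\prob{\bm{F}}{\pdeg_{1/10}(\bm{F}) \le d} \;\le\; M^{O(D)} \cdot 2^{-M/4} \;=\; 2^{\,O(D\log M) - M/4}.
\]
Since $M > m^{10d}$ we have $D = m^{O(d)} \le M^{1/3}$, so the exponent is $O(M^{1/3}\log M) - M/4$; and since $X \subseteq \{0,1\}^m$ forces $M \le 2^m$, the hypothesis $M > m^{10d}$ in particular forces $m^{10d} < 2^m$, which makes $M$ large enough that this exponent is comfortably negative. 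Hence the probability is $< 1/10$, as desired.

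The step I expect to be the crux is the second one: recognizing that the continuum of real degree-$\le d$ polynomials induces only $M^{O(D)}$ distinct $\{0,1,\star\}$-patterns on $X$, and invoking the right sign-pattern / zero-pattern bound to quantify this --- with the minor wrinkle that the pattern must record both the event $P(x) = 0$ and the event $P(x) = 1$ rather than a single sign. The remaining ingredients (the averaging in the first paragraph, Hoeffding's inequality, and the final numerical check using $M > m^{10d}$) are routine.
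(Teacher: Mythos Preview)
Your proposal is correct and reaches the same conclusion as the paper, but the counting step is done differently. Both proofs begin with the same averaging observation (a $(1/10)$-error probabilistic polynomial yields a single deterministic degree-$\le d$ polynomial agreeing with $F$ on $\ge 9M/10$ points of $X$), and both finish by comparing a count against $2^M$. The divergence is in how that count is obtained. The paper encodes each ``bad'' restriction $g:X\to\{0,1\}$ by a triple (the agreement set $X'$, the values on $X\setminus X'$, and a witnessing polynomial $Q$), and controls the number of possible $Q$ by observing that the linear system ``$Q|_{X'}=g|_{X'}$'' has a rational solution of bit complexity $O(D^3)$ (an application of Cramer's rule via \cite[Corollary 3.2d]{Schrijver}); this gives $|\mc{B}|\le 2^{9M/10}$ directly, with no Chernoff step. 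You instead discretize the continuum of real polynomials by bounding the number of $\{0,1,\star\}$-patterns they induce on $X$ via a Warren/zero-pattern bound, obtaining $M^{O(D)}$ patterns, and then apply Hoeffding per pattern. Your route is arguably more geometric and sidesteps the bit-complexity lemma; the paper's route is more elementary (only linear algebra over $\mathbb{Q}$) and gives slightly cleaner constants without needing a tail bound. Both are standard techniques, and either would be acceptable here.
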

	
	Let us see how to use Lemma~\ref{lem:rand-fn} to prove Lemma~\ref{lem:lbd-low-sens}. This proof again breaks into two smaller steps. 
	\begin{enumerate}
		\item[Step 1:] Show that, after a projection, $f$ turns into something similar to an addressing function, that we will call a \emph{Pseudoaddressing function.}
		\item[Step 2:] Show that any pseudoaddressing function has large probabilistic degree.
	\end{enumerate}
	As any projection $g$ of $f$ satisfies $\pdeg(g) \leq \pdeg(f)$ (Fact~\ref{fac:pdeg}), the above implies a lower bound on $\pdeg(f)$, hence proving Lemma~\ref{lem:lbd-low-sens}.
	
	To make the above precise, we need the following definition. We say that a function $g:\{0,1\}^{r+t}\rightarrow \{0,1\}$ is an \emph{$(r,t)$-Pseudoaddressing function} if the input variables to $g$ can be partitioned into two sets $Y = \{y_1,\ldots,y_r\}$ and $Z = \{z_1,\ldots,z_t\}$ and $g$ can be computed by a decision tree $T$ with the following properties.
	\begin{enumerate}
		\item[(P1)] For each $z_j\in Z$, there are two root-to-leaf paths $\pi_j^0$ and $\pi_j^1$ in $T$ that diverge at a node labeled $z_j$ and lead to outputs $0$ and $1$ respectively.
		\item[(P2)] All the other nodes on these paths are labeled by variables in $Y$, and further these variables take the same values on both paths. In particular, $\pi_j^0$ and $\pi_j^1$ differ only on the value of $z_j$.
	\end{enumerate}
	
	\begin{example}
		Consider the standard Addressing function $\mathrm{Addr}_r$ on $n= r+2^r$ variables as defined in Section~\ref{sec:pf-outline}. This function is an $(r,2^r)$-pseudoaddressing function as it can be computed by a decision tree of depth $r+1$, which first queries all the addressing variables to determine $a\in \{0,1\}^r$ and then queries and outputs the value of $z_a$ (the two computational paths querying $z_a$ give the desired root-to-leaf paths required in the definition above).
	\end{example}
	
	In analogy with the Addressing function, given an $(r,t)$-pseudoaddressing function as above, we refer to the variables in $Y$ as the addressing variables and the variables in $Z$ as the addressed variables.
	
	The two steps of the proof as outlined above can now be formalized as follows.
	
	\begin{claim}
		\label{clm:pseudo-proj}
		Let $f$ be as in the statement of Lemma~\ref{lem:lbd-low-sens}. Then, there exist $r\leq s^{O(1)}$ and $t \geq n/s^{O(1)}$ and a projection $\nu:[n]\rightarrow [r+t]$ such that $g = f|_\nu$ is an $(r,t)$-pseudoaddressing function.
	\end{claim}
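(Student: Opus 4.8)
The plan is to build $\nu$ in two randomized stages, guided by the decision tree that Huang's theorem supplies. First I would apply Theorem~\ref{thm:Huang} to fix a decision tree $T$ for $f$ of depth $d \le s^{c_0} = s^{O(1)}$. Since $f$ is truly $n$-variate, for each $i \in [n]$ I would pick an input $a^{(i)}$ with $f(a^{(i)}) \ne f(a^{(i)}\oplus e_i)$ and trace the computation paths of $a^{(i)}$ and $a^{(i)}\oplus e_i$ through $T$: these paths agree on a common prefix, first diverge at a node $v_i$ labeled $x_i$ (they must diverge somewhere, since the outputs differ), and then end at leaves with distinct labels. Let $D_i \subseteq [n]\setminus\{i\}$ be the set of all variables queried on either of these two paths, so $|D_i| \le 3d$, and let $\tau_i$ be the restriction of $a^{(i)}$ to $D_i$. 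A short analysis of $T$ then shows that $f|_{\tau_i}$ is a function of $x_i$ alone, indeed $f|_{\tau_i} \in \{x_i,\, 1-x_i\}$: running $T$ under $\tau_i$ deterministically follows one of the two traced paths according to the value of $x_i$, reaching leaves with distinct labels. This tree analysis is the first place where care is needed.

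Next I would select the ``addressed'' coordinates and collapse the rest. In Stage~A, put each $i \in [n]$ into a random subset of $[n]$ independently with probability $\Theta(1/d)$, and let $I$ consist of those $i$ in the subset whose set $D_i$ is disjoint from the subset. Since $i \notin D_i$, an expectation computation gives $|I| \ge n/s^{O(1)}$ for a suitable outcome, and by construction $D_i \cap I = \emptyset$ for every $i \in I$; in particular all the $D_i$ with $i \in I$ lie inside $[n]\setminus I$. In Stage~B, fix a uniformly random map $\mu\colon [n]\setminus I \to [r]$ with $r = 9d^2 = s^{O(1)}$; for each fixed $i \in I$, a birthday-paradox bound gives $\mathop{\mathrm{Pr}}[\mu|_{D_i}\text{ injective}] \ge 1 - \binom{|D_i|}{2}/r \ge 1/2$, so some outcome $\mu$ is injective on $D_i$ for at least $|I|/2 \ge n/s^{O(1)}$ of the indices $i \in I$; fix such a $\mu$ and let $I''$ be the corresponding sub-collection. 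A key point here is that one must retain only a constant fraction of $I$: a union bound over all of $I$ would force $r = \Omega(n)$, which is useless.

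Finally I would assemble $\nu$ by sending $I''$ bijectively onto a fresh set $Z$ of $t := |I''| \ge n/s^{O(1)}$ ``addressed'' coordinates and sending $[n]\setminus I''$ onto a set $Y$ of size $r' \le r = s^{O(1)}$ (using $\mu$ on $[n]\setminus I$ and anything on $I\setminus I''$). For $i \in I''$, injectivity of $\nu$ on $D_i$ lets the restriction $\tau_i$ be realized by a partial assignment $\alpha_i$ to $Y$-variables, and since $\nu^{-1}(\nu(i)) = \{i\}$ one checks that $g|_{\alpha_i}$ depends only on the coordinate $\nu(i)\in Z$, with $g|_{\alpha_i}$ equal to that variable or its negation. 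To conclude that $g = f|_\nu$ is an $(r',t)$-pseudoaddressing function I would exhibit the decision tree that queries all of $Y$ in a fixed order and then, at the unique $Y$-leaf compatible with a given $\alpha_i$, queries the variable $\nu(i)$ and outputs its (possibly negated) value, filling all remaining $Y$-leaves with arbitrary decision trees for the corresponding restrictions of $g$; properties (P1) and (P2) are then immediate, since each pair of paths $\pi_j^0,\pi_j^1$ has no node below the $z_j$-query and the $Y$-nodes above it carry identical values on both. The main obstacles, as indicated, are the careful extraction of the $\tau_i$ from the structure of $T$ in the first step and the recognition in the second step that the projection must be permitted to discard a fraction of $I$; the remaining content is bookkeeping, together with the monotonicity of probabilistic degree under projections from Fact~\ref{fac:pdeg}, which is what makes the claim useful in the proof of Lemma~\ref{lem:lbd-low-sens}.
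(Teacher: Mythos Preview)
Your proposal is correct and follows the same three-stage strategy as the paper: use a shallow decision tree to attach to each variable a pair of divergent root-to-leaf paths, thin the variables to a set whose path-variable-sets avoid the set itself, then randomly project the remaining variables into $O(d^2)$ buckets and keep only those indices whose paths survive the projection. The implementations differ in each stage. You find the paths via a sensitive input pair $(a^{(i)}, a^{(i)}\oplus e_i)$, whereas the paper fixes a \emph{minimum-size} tree and uses minimality to argue that the two subtrees below any $x_i$-node compute distinct functions; your version is arguably cleaner since it sidesteps the minimality argument entirely. For the thinning you use random sparsification at rate $\Theta(1/d)$, while the paper applies Tur\'an's theorem to the graph with edges $\{i,j\}$ whenever $j\in P_i$ or $i\in P_j$; both yield a set of size $\Omega(n/d)$. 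For the witness tree the paper simply projects $T_f$ itself via $\nu$ and checks that the relevant paths survive intact, whereas you build a fresh tree that brute-force queries all of $Y$ first; both are valid because the pseudoaddressing definition carries no depth constraint, though the paper's projected tree has depth $d$ rather than $r'$.

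One small wording issue: your $\alpha_i$ is a \emph{partial} $Y$-assignment (its domain is $\nu(D_i)$, of size at most $3d \ll r$), so ``the unique $Y$-leaf compatible with $\alpha_i$'' is not literally correct. This is harmless: since $g|_{\alpha_i}$ already equals $z_{\nu(i)}$ or its negation, the partial assignments $\alpha_i$ for distinct $i\in I''$ are pairwise incompatible (a common extension $\beta$ would make $g|_\beta$ depend only on two distinct $z$-coordinates simultaneously), so any choice of full extension $\beta_i\supseteq\alpha_i$ gives distinct $Y$-leaves at which you can place the $z_{\nu(i)}$-query.
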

	
	\begin{claim}
		\label{clm:pseudo-lbd}
		Let $g$ be any $(r,t)$-pseudoaddressing function. Then, $\pdeg(g) = \Omega(\log t/\log r).$
	\end{claim}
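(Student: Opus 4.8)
The plan is to exhibit a random restriction of $g$ that matches the hypothesis of the random-function lower bound (Lemma~\ref{lem:rand-fn}), with the addressed variables supplying the independent random bits. Fix a decision tree $T$ computing $g$ together with the partition $Y=\{y_1,\ldots,y_r\}$, $Z=\{z_1,\ldots,z_t\}$ and, for each $j\in[t]$, the paths $\pi_j^0,\pi_j^1$ of (P1)--(P2). For each $j$ let $a^{(j)}\in\{0,1\}^r$ be any assignment to $Y$ that agrees with the (common) values taken by $Y$-variables along $\pi_j^0$, extended arbitrarily on $Y$-variables not queried on $\pi_j^0$. First I would record the structural fact that $j\mapsto a^{(j)}$ is injective: by (P2) the only $Z$-node on $\pi_j^0$ is the divergence node $z_j$, so the portion of $\pi_j^0$ preceding $z_j$ is precisely the deterministic computation path that $T$ follows when every $Y$-query is answered according to $a^{(j)}$, run until $T$ first queries a $Z$-variable. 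Consequently $z_j$ is determined by $a^{(j)}$ as ``the first $Z$-variable $T$ queries under $Y=a^{(j)}$'', so $a^{(j)}=a^{(j')}$ implies $j=j'$. Thus $X:=\{a^{(j)} : j\in[t]\}$ has size exactly $t$.

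Next, set the addressed variables to independent uniform bits $\bm{b}=(\bm{b}_1,\ldots,\bm{b}_t)$ and let $\bm{F}:=g_{\bm{b}}:\{0,1\}^r\to\{0,1\}$ be the induced random function on $Y$. Evaluated at $a^{(j)}$, the tree $T$ follows the above deterministic prefix to the node $z_j$, reads $z_j=\bm{b}_j$, and then (all remaining nodes on $\pi_j^{\bm{b}_j}$ are $Y$-nodes consistent with $a^{(j)}$, by (P2)) runs to the leaf of $\pi_j^{\bm{b}_j}$, whose output label is $\bm{b}_j$ by (P1) --- or $1-\bm{b}_j$, if the superscript convention is reversed; in either case a bijective function of $\bm{b}_j$ alone. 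Hence $\bm{F}(a^{(j)})$ depends on $\bm{b}$ only through $\bm{b}_j$, so $(\bm{F}(x))_{x\in X}$ are independent uniform bits. We may assume $r\ge 2$ and $t>r^{10}$, since otherwise $\log t/\log r=O(1)$ and the bound is immediate ($g$ is non-constant as $t\ge 1$). Applying Lemma~\ref{lem:rand-fn} with $m=r$, $M=t$ and $d=\Theta(\log t/\log r)$ chosen so that $t>r^{10d}$, we get $\mathrm{Pr}_{\bm{b}}[\pdeg_{1/10}(\bm{F})\le d]<1/10$, so $\pdeg_{1/10}(\bm{F})>d$ with positive probability.

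Since $\bm{F}=g_{\bm{b}}$ is a restriction of $g$, Fact~\ref{fac:pdeg} Item 2 gives $\pdeg_{1/10}(\bm{F})\le\pdeg_{1/10}(g)$ with probability $1$; combined with the previous line this forces $\pdeg_{1/10}(g)>d=\Omega(\log t/\log r)$. Finally, by the error-reduction statement of Fact~\ref{fac:pdeg} Item 3 (with $\delta=1/10$, $\varepsilon=1/3$), $\pdeg_{1/10}(g)\le O(1)\cdot\pdeg(g)$, so $\pdeg(g)=\Omega(\log t/\log r)$.

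The part I expect to be the crux is the structural bookkeeping in the first paragraph: verifying carefully that the $Y$-prefix of $\pi_j^0$ is exactly the deterministic path $T$ takes under $Y=a^{(j)}$, which is what simultaneously yields injectivity of $j\mapsto a^{(j)}$ and the fact that $\bm{F}(a^{(j)})$ isolates the single random bit $\bm{b}_j$. Once that is in place, the remaining steps --- the choice of $d$, the restriction inequality, and the error reduction --- are routine.
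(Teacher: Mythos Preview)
Your argument is the same as the paper's: set the $Z$-variables to independent uniform bits, use the paths $\pi_j^0,\pi_j^1$ to locate $t$ inputs $a^{(j)}\in\{0,1\}^r$ at which the restricted function takes independent uniform values, apply Lemma~\ref{lem:rand-fn}, and finish with Fact~\ref{fac:pdeg} (restriction and error reduction). Your explicit check that $j\mapsto a^{(j)}$ is injective is actually more careful than the paper, which leaves $|X|=t$ implicit.

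There is one small slip in your definition of $a^{(j)}$. You take $a^{(j)}$ consistent with the $Y$-values along $\pi_j^0$ and \emph{arbitrary} on $Y$-variables not queried on $\pi_j^0$. But after the divergence node, $\pi_j^1$ may query $Y$-variables that $\pi_j^0$ does not, and your arbitrary extension can disagree with the values those variables take on $\pi_j^1$. In that case, for the value of $\bm{b}_j$ that sends the computation into the $\pi_j^1$ branch, the tree under $Y=a^{(j)}$ leaves $\pi_j^1$ and may read other $Z$-variables, so $\bm{F}(a^{(j)})$ need not be a function of $\bm{b}_j$ alone. The fix is exactly what the paper does (and what your later appeal to (P2) really needs): choose $a^{(j)}$ consistent with \emph{both} $\pi_j^0$ and $\pi_j^1$, which is possible precisely because (P2) says the two paths assign the same value to every $Y$-variable they share. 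With that adjustment your proof goes through verbatim.
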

	
	As noted above, the above claims immediately imply Lemma~\ref{lem:lbd-low-sens}. We now prove these claims.
	
	\begin{proof}[Proof of Claim~\ref{clm:pseudo-proj}]
		We will first outline how to isolate a set of $n/\poly(s)$ variables that will (almost) be the set of addressed variables. A projection will then be applied to the remaining variables to create the pseudoaddressing function. Let us now see the details.
		
		By Theorem~\ref{thm:Huang}, we know that $f$ has a decision tree $T_f$ of depth $d\leq \poly(s)$. Fix such a tree $T_f$ of \emph{minimum size}, i.e. with the smallest possible number of leaves. Let $V=\{x_1,\ldots,x_n\}$ denote the input variables of $f$. %For $j\in [h]$, let $V_j\subseteq \{x_1,\ldots,x_n\}$ denote the set of variables queried at depth $j-1$ in $T_f$ but not at smaller depths. As $V_1,\ldots,V_h$ partition the set of variables, there is a $j_0$ such that $|V_{j_0}|\geq n/h.$  Fix this $j_0$ for the rest of the proof.

		Any variable $x_i\in V$ must be queried somewhere in the tree $T_f$, as $f$ depends on all its input variables by assumption. Fix any occurrence of this variable in the decision tree $T_f$, and let $w$ denote the node of $T_f$ corresponding to this query.  (Refer to Figure~\ref{fig:main} (a) for an illustration.) Let $\pi_i$ denote the path from the root of $T_f$ to $w$ and let $T_0$ and $T_1$ be the subtrees rooted at the left and right children of $w$. The decision trees $T_0$ and $T_1$ both compute functions of the $n'<n$ Boolean variables not queried in $\pi_i$. Note that these decision trees compute \emph{distinct} functions since otherwise the query made at the vertex $w$ is unnecessary, and a smaller decision tree than $T_f$ can be obtained by replacing the subtree rooted at $w$ by $T_0$ or by $T_1$. This contradicts the minimality of the size of $T_f$.
		
		\begin{figure}[t!]
			\centering
			\begin{tikzpicture}[scale=0.8]
				\draw[ForestGreen,ultra thick] (0,0)--(-5+2.5,-2);
				\node[xshift=25,yshift=30] at (-5+2.5,-2) {\textcolor{ForestGreen}{\(\pi_8\)}};
				\draw (0,0)--(5-2.5,-2);
				
				\draw (-5+2.5,-2)--(-6+2.5,-4);
				\draw[ForestGreen,ultra thick] (-5+2.5,-2)--(-4+2.5,-4);
				\draw (5-2.5,-2)--(4-2.5,-4);
				\draw (5-2.5,-2)--(6-2.5,-4);
				
				\draw (-7+1+2.5,-4)--(-7.5+1+2.5,-6);
				\draw (-7+1+2.5,-4)--(-6-0.5+1+2.5,-6);
				\draw[ForestGreen,ultra thick] (-3-1+2.5,-4)--(-4+0.5-1+2.5,-6);
				\draw (-3-1+2.5,-4)--(-2-0.5-1+2.5,-6);
				\draw (3+1-2.5,-4)--(2+0.5+1-2.5,-6);
				\draw (3+1-2.5,-4)--(4-0.5+1-2.5,-6);
				\draw (7-1-2.5,-4)--(6+0.5-1-2.5,-6);
				\draw (7-1-2.5,-4)--(7.5-1-2.5,-6);
				
				\draw (-7.5+1+2.5,-6)--(-7.5+1+2.5,-8);
				\draw (-7.5+1+2.5,-6)--(-6.5-0.4-0.2+1+2.5,-8);
				\draw[Cyan,ultra thick] (-4+0.5-1+2.5,-6)--(-4.5+0.2+0.5+0.1-1+2.5-0.5,-8);
				\node[xshift=-16,yshift=-20] at (-4+0.5-1+2.5,-6) {\textcolor{Cyan}{\(\pi_w^1\)}};
				\draw[Cyan,ultra thick] (-4+0.5-1+2.5,-6)--(-3.5-0.2+0.5-0.1-1+2.5,-8);
				\node[xshift=10,yshift=-20] at (-4+0.5-1+2.5,-6) {\textcolor{Cyan}{\(\pi_w^0\)}};
				\draw (-2-0.5-1+2.5,-6)--(-2.5+0.2-0.5+0.1-1+2.5+0.2,-8);
				\draw (-2-0.5-1+2.5,-6)--(-1.5-0.2-0.5-0.1-1+2.5+0.2,-8);
				\draw (2+0.5+1-2.5,-6)--(1.5+0.2+0.5+0.1+1-2.5,-8);
				\draw (2+0.5+1-2.5,-6)--(2.5-0.2+0.5-0.1+1-2.5,-8);
				\draw (6+0.5-1-2.5,-6)--(5.5+0.2+0.5+0.1-1-2.5,-8);
				\draw (6+0.5-1-2.5,-6)--(6.5-0.2+0.5-0.1-1-2.5+0.5,-8);
				
				\draw (-4.5+0.2+0.5+0.1-1+2.5-0.5,-8)--(-4.5+0.2+0.5+0.1-1+2.5-0.5-0.2,-10);
				\draw[Cyan,ultra thick] (-4.5+0.2+0.5+0.1-1+2.5-0.5,-8)--(-4.5+0.2+0.5+0.1-1+2.5-0.5+0.2,-10);
				\draw (-3.5-0.2+0.5-0.1-1+2.5,-8)--(-3.5-0.2+0.5-0.1-1+2.5-0.2,-10);
				\draw[Cyan,ultra thick] (-3.5-0.2+0.5-0.1-1+2.5,-8)--(-3.5-0.2+0.5-0.1-1+2.5+0.2,-10);
				\draw (6.5-0.2+0.5-0.1-1-2.5+0.5,-8)--(6.5-0.2+0.5-0.1-1-2.5+0.5-0.2,-10);
				\draw (6.5-0.2+0.5-0.1-1-2.5+0.5,-8)--(6.5-0.2+0.5-0.1-1-2.5+0.5+0.2,-10);

				\vertexcirc{0}{0}{\(x_1\)}{}{Magenta,thick}
				
				\vertexcirc{-5+2.5}{-2}{\(x_2\)}{}{Magenta,thick}
				\vertexcirc{5-2.5}{-2}{\(x_2\)}{}{}
				
				\vertexcirc{-6+2.5}{-4}{\(x_5\)}{}{}
				\vertexcirc{-4+2.5}{-4}{\(x_3\)}{}{Magenta,thick}
				\vertexcirc{4-2.5}{-4}{\(x_6\)}{}{}
				\vertexcirc{6-2.5}{-4}{\(x_7\)}{}{}
				
				\vertexcirc{-7.5+1+2.5}{-6}{\(x_3\)}{}{}
				\vertexsq{-6.5+1+2.5}{-6}{0}{}{}
				\vertexcirc{-3.5-1+2.5}{-6}{\(x_8\)}{}{blue,thick}
				\node[xshift=-10,yshift=12] at (-3.5-1+2.5,-6) {\(w\)};
				\vertexcirc{-2.5-1+2.5}{-6}{\(x_9\)}{}{}
				\vertexcirc{2.5+1-2.5}{-6}{\(x_3\)}{}{}
				\vertexsq{3.5+1-2.5}{-6}{0}{}{}
				\vertexcirc{6.5-1-2.5}{-6}{\(x_4\)}{}{}
				\vertexsq{7.5-1-2.5}{-6}{1}{}{}
				
				\vertexsq{-7.5+1+2.5}{-8}{0}{}{}
				\vertexsq{-6.5-0.4-0.2+1+2.5}{-8}{1}{}{}
				\vertexcirc{-4.5+0.2+0.5+0.1-1+2.5-0.5}{-8}{\(x_4\)}{}{Magenta,thick}
				\vertexcirc{-3.5-0.2+0.5-0.1-1+2.5}{-8}{\(x_4\)}{}{Magenta,thick}
				\vertexsq{-2.5+0.2-0.5+0.1-1+2.5+0.2}{-8}{0}{}{}
				\vertexsq{-1.5-0.2-0.5-0.1-1+2.5+0.2}{-8}{1}{}{}
				\vertexsq{1.5+0.2+0.5+0.1+1-2.5}{-8}{0}{}{}
				\vertexsq{2.5-0.2+0.5-0.1+1-2.5}{-8}{1}{}{}
				\vertexsq{5.5+0.2+0.5+0.1-1-2.5}{-8}{0}{}{}
				\vertexcirc{6.5-0.2+0.5-0.1-1-2.5+0.5}{-8}{\(x_{10}\)}{}{}
				
				\vertexsq{-4.5+0.2+0.5+0.1-1+2.5-0.5-0.2}{-10}{0}{}{}
				\vertexsq{-4.5+0.2+0.5+0.1-1+2.5-0.5+0.2}{-10}{1}{}{}
				\vertexsq{-3.5-0.2+0.5-0.1-1+2.5-0.2}{-10}{1}{}{}
				\vertexsq{-3.5-0.2+0.5-0.1-1+2.5+0.2}{-10}{0}{}{}
				\vertexsq{6.5-0.2+0.5-0.1-1-2.5+0.5-0.2}{-10}{0}{}{}
				\vertexsq{6.5-0.2+0.5-0.1-1-2.5+0.5+0.2}{-10}{1}{}{}
				
				\node at (0,-11) {(a)\quad The tree $T_f$};
			\end{tikzpicture}
			\hspace{1cm}
			\begin{tikzpicture}[scale=0.8]
				\draw[ForestGreen,ultra thick] (0,0)--(-5+2.5,-2);
				\node[xshift=25,yshift=30] at (-5+2.5,-2) {\textcolor{ForestGreen}{\(\pi_8\)}};
				\draw (0,0)--(5-2.5,-2);
				
				\draw (-5+2.5,-2)--(-6+2.5,-4);
				\draw[ForestGreen,ultra thick] (-5+2.5,-2)--(-4+2.5,-4);
				\draw (5-2.5,-2)--(4-2.5,-4);
				\draw (5-2.5,-2)--(6-2.5,-4);
				
				\draw (-7+1+2.5,-4)--(-7.5+1+2.5,-6);
				\draw (-7+1+2.5,-4)--(-6-0.5+1+2.5,-6);
				\draw[ForestGreen,ultra thick] (-3-1+2.5,-4)--(-4+0.5-1+2.5,-6);
				\draw (-3-1+2.5,-4)--(-2-0.5-1+2.5,-6);
				\draw (3+1-2.5,-4)--(2+0.5+1-2.5,-6);
				\draw (3+1-2.5,-4)--(4-0.5+1-2.5,-6);
				\draw (7-1-2.5,-4)--(6+0.5-1-2.5,-6);
				\draw (7-1-2.5,-4)--(7.5-1-2.5,-6);
				
				\draw (-7.5+1+2.5,-6)--(-7.5+1+2.5,-8);
				\draw (-7.5+1+2.5,-6)--(-6.5-0.4-0.2+1+2.5,-8);
				\draw[Cyan,ultra thick] (-4+0.5-1+2.5,-6)--(-4.5+0.2+0.5+0.1-1+2.5-0.5,-8);
				\node[xshift=-16,yshift=-20] at (-4+0.5-1+2.5,-6) {\textcolor{Cyan}{\(\pi_w^1\)}};
				\draw[Cyan,ultra thick] (-4+0.5-1+2.5,-6)--(-3.5-0.2+0.5-0.1-1+2.5,-8);
				\node[xshift=10,yshift=-20] at (-4+0.5-1+2.5,-6) {\textcolor{Cyan}{\(\pi_w^0\)}};
				\draw (-2-0.5-1+2.5,-6)--(-2.5+0.2-0.5+0.1-1+2.5+0.2,-8);
				\draw (-2-0.5-1+2.5,-6)--(-1.5-0.2-0.5-0.1-1+2.5+0.2,-8);
				\draw (2+0.5+1-2.5,-6)--(1.5+0.2+0.5+0.1+1-2.5,-8);
				\draw (2+0.5+1-2.5,-6)--(2.5-0.2+0.5-0.1+1-2.5,-8);
				\draw (6+0.5-1-2.5,-6)--(5.5+0.2+0.5+0.1-1-2.5,-8);
				\draw (6+0.5-1-2.5,-6)--(6.5-0.2+0.5-0.1-1-2.5,-8);
				
				\draw (-4.5+0.2+0.5+0.1-1+2.5-0.5,-8)--(-4.5+0.2+0.5+0.1-1+2.5-0.5-0.2,-10);
				\draw[Cyan,ultra thick] (-4.5+0.2+0.5+0.1-1+2.5-0.5,-8)--(-4.5+0.2+0.5+0.1-1+2.5-0.5+0.2,-10);
				\draw (-3.5-0.2+0.5-0.1-1+2.5,-8)--(-3.5-0.2+0.5-0.1-1+2.5-0.2,-10);
				\draw[Cyan,ultra thick] (-3.5-0.2+0.5-0.1-1+2.5,-8)--(-3.5-0.2+0.5-0.1-1+2.5+0.2,-10);

				\vertexcirc{0}{0}{\(y_1\)}{}{Magenta,thick}
				
				\vertexcirc{-5+2.5}{-2}{\(y_2\)}{}{Magenta,thick}
				\vertexcirc{5-2.5}{-2}{\(y_2\)}{}{Magenta,thick}
				
				\vertexcirc{-6+2.5}{-4}{\(z_1\)}{}{blue,thick}
				\vertexcirc{-4+2.5}{-4}{\(y_3\)}{}{Magenta,thick}
				\vertexcirc{4-2.5}{-4}{\(z_2\)}{}{blue,thick}
				\vertexcirc{6-2.5}{-4}{\(z_3\)}{}{blue,thick}
				
				\vertexcirc{-7.5+1+2.5}{-6}{\(y_3\)}{}{Magenta,thick}
				\vertexsq{-6.5+1+2.5}{-6}{0}{}{}
				\vertexcirc{-3.5-1+2.5}{-6}{\(z_4\)}{}{blue,thick}
				\node[xshift=-10,yshift=12] at (-3.5-1+2.5,-6) {\(w\)};
				\vertexcirc{-2.5-1+2.5}{-6}{\(z_5\)}{}{blue,thick}
				\vertexcirc{2.5+1-2.5}{-6}{\(y_4\)}{}{Magenta,thick}
				\vertexsq{3.5+1-2.5}{-6}{0}{}{}
				\vertexcirc{6.5-1-2.5}{-6}{\(y_4\)}{}{Magenta,thick}
				\vertexsq{7.5-1-2.5}{-6}{1}{}{}
				
				\vertexsq{-7.5+1+2.5}{-8}{0}{}{}
				\vertexsq{-6.5-0.4-0.2+1+2.5}{-8}{1}{}{}
				\vertexcirc{-4.5+0.2+0.5+0.1-1+2.5-0.5}{-8}{\(y_4\)}{}{Magenta,thick}
				\vertexcirc{-3.5-0.2+0.5-0.1-1+2.5}{-8}{\(y_4\)}{}{Magenta,thick}
				\vertexsq{-2.5+0.2-0.5+0.1-1+2.5+0.2}{-8}{0}{}{}
				\vertexsq{-1.5-0.2-0.5-0.1-1+2.5+0.2}{-8}{1}{}{}
				\vertexsq{1.5+0.2+0.5+0.1+1-2.5}{-8}{0}{}{}
				\vertexsq{2.5-0.2+0.5-0.1+1-2.5}{-8}{1}{}{}
				\vertexsq{5.5+0.2+0.5+0.1-1-2.5}{-8}{0}{}{}
				\vertexsq{6.5-0.2+0.5-0.1-1-2.5}{-8}{1}{}{}
				
				\vertexsq{-4.5+0.2+0.5+0.1-1+2.5-0.5-0.2}{-10}{0}{}{}
				\vertexsq{-4.5+0.2+0.5+0.1-1+2.5-0.5+0.2}{-10}{1}{}{ForestGreen,very thick}
				\vertexsq{-3.5-0.2+0.5-0.1-1+2.5-0.2}{-10}{1}{}{}
				\vertexsq{-3.5-0.2+0.5-0.1-1+2.5+0.2}{-10}{0}{}{ForestGreen,very thick}
				
				\node at (0,-11) {(b)\quad The tree $T$ (after projection)};
			\end{tikzpicture}
			\caption{\small The decision tree on the left computes a truly $10$-variate function $f(x_1,\ldots,x_{10})$. The paths obtained by concatenating $\pi_8$ with $\pi_w^0$ and $\pi_w^1$ are consistent with each other except for the value of $x_8$, the variable queried at node $w$. After a projection $\nu:[10]\rightarrow [9]$ defined by $\nu(i) = i$ for $i\leq 9$ and $\nu(10) = 4$, we get a tree $T$, which computes a $(4,5)$-pseudoaddressing function $g(y_1,\ldots,y_4,z_1,\ldots,z_5)$. Note that each path in $T$ corresponds to a path in $T_f$ but not every path in $T_f$ survives in $T$ (e.g. the path leading to $0$ through the node querying $x_{10}$ is pruned away, as it is inconsistent with $\nu$).}
			\label{fig:main}
		\end{figure}
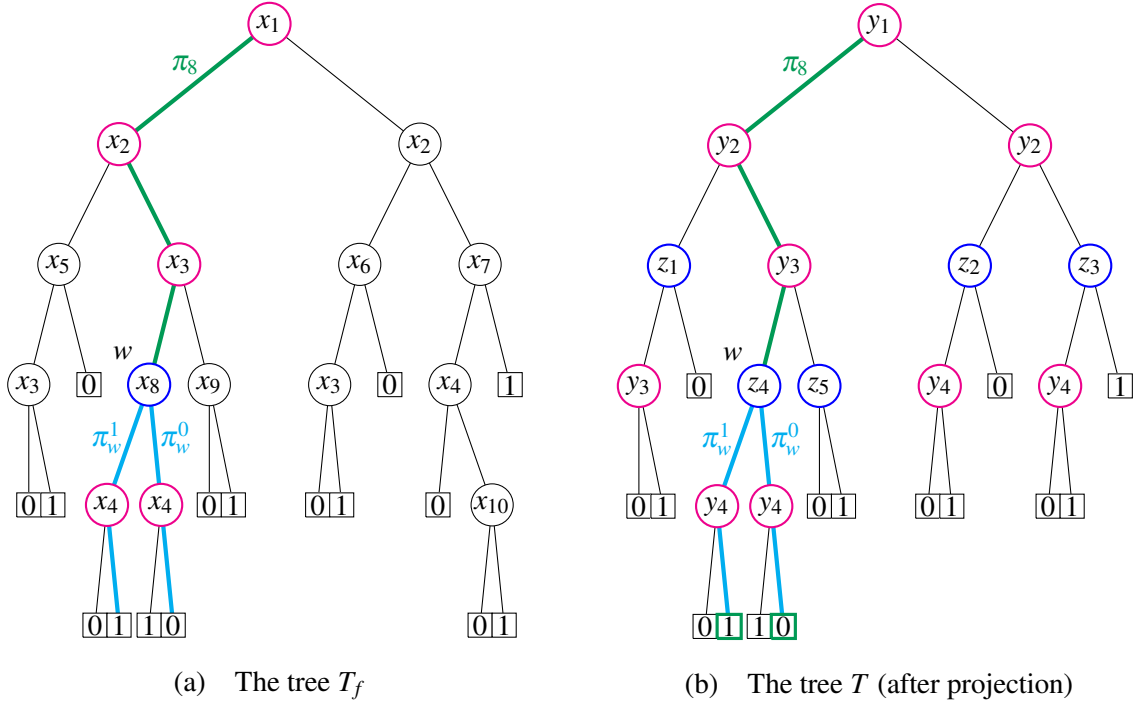
		
		Thus, $T_0$ and $T_1$ compute distinct functions. In particular, there is an input $a\in \{0,1\}^{n'}$ on which $T_0$ and $T_1$ have different outputs; w.l.o.g., assume $T_0$ and $T_1$ output $0$ and $1$ respectively on $a$. Let $\pi_w^0$ and $\pi_w^1$ be the root-to-leaf paths followed on the input $a$ in $T_0$ and $T_1$ respectively. Note that any variable queried on both $\pi_w^0$ and $\pi_w^1$ takes the same value on both paths, as both paths are consistent with the input $a$. (Again, see Figure~\ref{fig:main} (a) for an example.)
		
		Concatenating each of $\pi_w^0$ and $\pi_w^1$ with the path $\pi_i$ gives us two root-to-leaf paths $\pi_i^0$ and $\pi_i^1$ in $T$ such that
		\begin{enumerate}
			\item[P1$'$] The paths $\pi_i^0$ and $\pi_i^1$  diverge at the node $w$ (labelled by variable $x_i$) and lead to outputs $0$ and $1$ respectively.
			\item[P2$'$] The two paths agree on all variables other than $x_i$, i.e., any other variable that is queried on $\pi_i^0$ and $\pi_i^1$ takes the same value on both.
		\end{enumerate}
		
		We have such a pair of paths $\pi_i^0$ and $\pi_i^1$ for each $x_i\in V$. Let $P_i$ denote the set of all $j\neq i$ such that $x_j$ is queried on $\pi_i^0$ or on $\pi_i^1$. Note that $|P_i|\leq 2d.$ 
		
		We claim that we can choose a large subset $Z'\subseteq [n]$ such that for all  $i\in Z'$, the set $P_i$ does not contain any $j$ where $j\in Z'$. To see this, define a graph $G$ with vertex $[n]$ and edges between vertices distinct $i,j\in [n]$ if and only if $P_i$ contains $j$ or vice-versa. Since each $|P_i|\leq 2d$, it is clear that this graph has average degree at most $2d$. By Tur\'{a}n's theorem (see e.g.~\cite{alon-spencer}), this implies that $G$ has an independent set $Z'$ of size at least $n/4d.$ This set $Z'$ has the required property.
		
		We are now ready to show that the required projection $\nu$ exists. Let $r = 10d^2$ and let $\bm{\nu}':[n]\setminus Z' \rightarrow [r]$ be a random map (i.e. the image of each element of the domain is independently and uniformly chosen from $[r]$). We say that an $i\in Z'$ is \emph{good} if $\bm{\nu}'$ is $1$-$1$ on the set $P_i$.  Let $\bm{\mc{G}}$ be the set of all good $i$, with $\bm{t} := |\bm{\mc{G}}|$. Assume $\bm{\mc{G}} = \{{i_1}, \ldots, {i_{\bm{t}}} \}.$ We use this to define a random projection $\bm{\nu}:[n]\rightarrow [r+\bm{t}]$ by
		\[
		\bm{\nu}(i) = \left\{
		\begin{array}{ll}
			\bm{\nu}'(i) & \text{if $i\not\in Z'$,}\\
			1 & \text{if $i\in Z'\setminus \bm{\mc{G}}$, (here, any $k\in [r]$ will do)}\\
			r+j & \text{if $i\in \bm{\mc{G}}$ and $i = i_j.$}
		\end{array}\right.
		\]
		
		The random projection defines a random Boolean function $\bm{g}$ on $r+\bm{t}$ variables. We now show that, with positive probability, $\bm{g}$ is an $(r,n/\poly(s))$-pseudoaddressing function, where the first $r$ variables are the addressing variables. This will finish the proof. Note that the projection $\bm{\nu}$ applied to the tree $T_f$ also defines a random decision tree $\bm{T}$ computing $\bm{g}$. We will in fact show that $\bm{T}$ serves as a witness for the fact that $\bm{g}$ is an $(r,n/\poly(s))$-pseudoaddressing function (with positive probability).
		
		In fact, this happens whenever $\bm{t} = |\bm{\mc{G}}|$ is large enough. More precisely, note that 
		\begin{align*}
			\avg{\bm{\nu}'}{|Z'|-\bm{t}} &= \sum_{i\in Z'} \prob{\bm{\nu}'}{\text{$\bm{\nu}'$ is not $1$-$1$ on $P_i$}}  \leq \sum_{i\in Z'} \sum_{j\neq k\in P_i} \prob{\bm{\nu}'}{\bm{\nu}'(j) = \bm{\nu}'(k)}\leq \sum_{i\in Z'} |P_i|^2 \cdot \frac{1}{r} \leq |Z'|\cdot \frac{(2d)^2}{r} \leq \frac{|Z'|}{2}.
		\end{align*}
		In particular, there is a setting $\nu'$ of $\bm{\nu}'$ such that the corresponding set of good variables $|\bm{\mc{G}}|$ has size at least $|Z'|/2.$ Fix this $\nu'$ and let $\mc{G},t,\nu, g,T$ be the corresponding fixings of $\bm{\mc{G}}, \bm{t}, \bm{\nu}, \bm{g}, \bm{T}$ respectively. 
		
		We have $g = g(y_1,\ldots,y_r, z_1,\ldots,z_t)$. Observe that each root-to-leaf path of $T$ can be identified with a root-to-leaf path of $T_f$. Further, a path $\pi$ of $T_f$ survives in $T$ exactly when it is \emph{consistent w.r.t. $\nu$}, i.e., if two variables that are set to opposite values in $\pi$ are not mapped to the same variable by $\nu$ (see Figure~\ref{fig:main} (b) for an example). In particular, if a path $\pi$ has the property that the variables queried along $\pi$ are mapped injectively by $\nu$, then the path $\pi$ survives in $T$.
		
		This implies that for any good ${i_j}\in \mc{G}$, the corresponding paths $\pi_{{i_j}}^0$ and $\pi_{{i_j}}^1$ survive in $T$. Moreover, as the projection $\nu$ is injective on the entire set $P_{i_j}$, these paths continue to agree with each other on all variables except the variable $z_j$ queried at the point of their divergence. This gives both properties P1 and P2 stated above. As this holds for each ${i_j}\in \mc{G},$ we see that $g$ is indeed an $(r,t)$-pseudoaddressing function. Note that $r = 10d^2 \leq \poly(s)$ and $t\geq n/4d \geq n/\poly(s)$. Hence, we have proved the claim.
	\end{proof}
	
	\begin{proof}[Proof of Claim~\ref{clm:pseudo-lbd}]
		The proof is via a reduction to Lemma~\ref{lem:rand-fn}. 
		
		Let $g(y_1,\ldots,y_r,z_1,\ldots,z_t)$ be an $(r,t)$-pseudoaddressing function. Consider the random function $\bm{F}$ on $\{0,1\}^r$ obtained by setting the addressed variables $z_1,\ldots,z_t$ to $\bm{b}_1,\ldots,\bm{b}_t\in \{0,1\}$ chosen i.u.a.r.. We show that there is an $X\subseteq \{0,1\}^r$ of size $t$ such that the random variables $(\bm{F}(a): a\in X)$ are independent and uniformly distributed bits. Then, Lemma~\ref{lem:rand-fn} implies the statement of the claim.
		
		Let us see how $X$ is defined. Let $T$ be the decision tree guaranteed for $g$ by virtue of the fact that it is an $(r,t)$-pseudoaddressing function. Further, for any $z_j$, let $\pi_j^0$ and $\pi_j^1$ be the paths satisfying P1 and P2 above. By P2, we can fix a setting $a^{(j)}\in \{0,1\}^r$ to the $y$-variables that is consistent with both paths. We set $X = \{a^{(j)}\ |\ j\in [t]\}.$
		
		To analyze $\bm{F}(a^{(j)})$, note that setting the variables $z_1,\ldots,z_t$ to $\bm{b}_1,\ldots, \bm{b}_t$ in $T$ gives us a (random) decision tree $\bm{T}'$ that computes $\bm{F}.$ In particular, the path followed by $\bm{T}'$ on input $a^{(j)}$ is uniformly chosen among $\pi_j^{0}$ and $\pi_j^{1}$ depending on the value of $z_j$, and hence $\bm{F}(a^{(j)})$ is either $\bm{b}_j$ or $1-\bm{b}_j$ (exactly which depends on the value of $z_j$ that is consistent with $\pi_j^0$ and $\pi_j^1$). In either case, however, $\bm{F}(a^{(j)})$ is a uniformly chosen random bit depending only on $\bm{b}_j.$ Hence, the random variables $(\bm{F}(a^{(j)}): j\in [t])$ are independent and uniformly distributed. 
		
		Thus, Lemma~\ref{lem:rand-fn} implies that with positive probability, $\pdeg_{1/10}(\bm{F}) = \Omega(\log t/\log r).$ However, we know by Fact~\ref{fac:pdeg} that, as $\bm{F}$ is a restriction of $g$, $\pdeg_{1/10}(\bm{F}) \leq \pdeg_{1/10}(g).$ Hence, we obtain the same lower bound for $\pdeg_{1/10}(g).$ Finally, by error reduction (Fact~\ref{fac:pdeg}), the same lower bound (up to constant factors) holds for $\pdeg_{1/3}(g) = \pdeg(g).$
	\end{proof}

	\subsection{Finishing the proof of Theorem~\ref{thm:lbd}}
	
	Lemma~\ref{lem:lbd-low-sens} and Lemma~\ref{lemma:lbd-high-sens} imply that 
	\[
	\pdeg(f) =\Omega\left( \max\left\{(\log s)^{c-o(1)}, \frac{\log(n/s^{O(1)})}{\log s}\right\}\right)
	\]
	where $s$ denotes the sensitivity of $f$. The above is minimized for $s$ so that $(\log s)^{c+1} = \Theta(\log n)$ (note that this implies that $s = n^{o(1)}$). For this $s$, we get
	\[
	\pdeg(f) = \Omega( (\log n)^{c/(c+1) - o(1)}) \geq (\log n)^{c/(c+1)-o(1)},
	\]
	proving the theorem.
	
	\section{The Upper Bound: Proof of Theorem~\ref{thm:ubd}}
	
	The construction is motivated by and closely follows a construction of Ambainis and de Wolf~\cite{AW}, who used it to prove the existence of a truly $n$-variate Boolean function $f$ whose approximate degree is $O(\log n/\log \log n).$ The construction of~\cite{AW} uses the fact that the approximate degree of the $\OR_n$ function is $O(\sqrt{n})$~\cite{Grover}. Using our assumption that the probabilistic degree of the $\OR_n$ function is $(\log n)^{c+o(1)}$ we are able to prove a stronger degree upper bound for probabilistic degree. In particular, Theorem~\ref{thm:BRST} allows us to prove an unconditional upper bound of $(\log n)^{(1/2)+o(1)}$ on the probabilistic degree of some $n$-variable function.
	
	The construction is a variant of the Addressing function, where the addressing bits are replaced by elements of a larger alphabet $[s]$, which are themselves presented in an encoded form that allows them to be easily `decoded' by low-degree polynomials. More precisely, we construct the function as follows.
	
	\paragraph{Construction.} Let $s$ be a power of $2$ and let $H\subseteq \{0,1\}^s$ be the set of codewords of the Hadamard code. That is, assume $s = 2^t$ and identify elements of $\{0,1\}^s$ with functions $h:\{0,1\}^t \rightarrow \{0,1\}.$ Then $H$ consists of precisely those elements $h\in \{0,1\}^s$ such that $h$ is a linear function when considered as a mapping from $\F_2^s$ to $\F_2$ in the natural way. The set $H$ contains precisely $s$ elements, say $\{h_1,\ldots,h_s\}.$
	
	We define a Boolean function $f$ on $n = sr + s^r + 1$ bits as follows. Any input $a$ is parsed as 
	\[
	a = (g_1,\ldots,g_r, T, b)
	\]
	where $g_1,\ldots,g_r:\{0,1\}^t\rightarrow \{0,1\}$, $T:[s]^r \rightarrow \{0,1\}$ and $b$ is a single bit. We define $f$ by
	\[
	f(a) = \left\{
	\begin{array}{ll}
		T(i_1,\ldots,i_r) & \text{if $g_1,\ldots,g_r\in H$ and $g_1 = h_{i_1},\ldots,g_r = h_{i_r}$,}\\
		b & \text{otherwise}.
	\end{array}
	\right.
	\]
	
	\paragraph{Analysis.} We have
	\begin{equation}
		\label{eq:f-defn}
		f(g_1,\ldots,g_r,T,b)=\sum_{i_1,\ldots,i_r\in[s]}1(g_1=h_{i_1},\ldots,g_r=h_{i_r})\cdot T(i_1,\ldots,i_r)+(1-1(g_1,\ldots,g_r\in H))\cdot b.
	\end{equation}
	Here $1(\mc{E})$ for a Boolean predicate $\mc{E}$ takes the value $1$ when the Boolean predicate is satisfied and $0$ otherwise.
	
	The above implies, in particular, that the function $f$ is truly $n$-variate. To see this, say the variables of $f$ are
	\begin{itemize}
		\item $x_{j,\alpha}$ ($j\in [r], \alpha\in \{0,1\}^t$) encoding the entries of the truth tables of $g_1,\ldots,g_r,$ More formally, the variable $x_{j,\alpha}$ is set to $g_{j}(\alpha).$
		\item $y_{i_1,\ldots,i_r}$ encoding the entries of $T$, and 
		\item $y_0$ which gives the value of $b$.
	\end{itemize}	
	Any variable $x_{j,\alpha}$ is influential at an input $(g_1,\ldots,g_r, T, b)$ where $g_1,\ldots,g_r$ are $h_{i_1},\ldots,h_{i_r}\in H$ respectively, and $b \neq T(i_1,\ldots,i_r)$, which implies that flipping the value of $x_{j,\alpha}$ at this point changes the output from $T(i_1,\ldots,i_r)$ to $b$. The variable $y_{i_1,\ldots,i_r}$ is also influential at the same point. The variable $y_0$ is influential at any input where not all the $g_i$ are in $H$. Thus, we see that $f$ is indeed $n$-variate.
	
	Now, we will show an upper bound on $\pdeg(f).$ This will be done by constructing two polynomials.
	\begin{itemize}
		\item A $1/3$-error probabilistic polynomial $\bm{Q}(x_{j,\alpha}: j\in [r], \alpha\in \{0,1\}^t)$ for the Boolean function  $1(g_1,\ldots,g_r\in H)$.
		\item For each $i_1,\ldots,i_r\in [s]$, a polynomial $R_{i_1,\ldots,i_r}(x_{j,\alpha}: j\in [r], \alpha\in \{0,1\}^t)$ such that at input $(g_1,\ldots,g_r)\in H^r$, $R_{i_1,\ldots,i_r}(g_1,\ldots,g_r) = 1$ if $g_1 = h_{i_1},\ldots,g_r = h_{i_r}$, and $0$ otherwise. (In other words, $R_{i_1,\ldots,i_r}$ computes a $\delta$-function on inputs from $H^r$. Note that we do not claim anything if $(g_1,\ldots,g_r)\not\in H^r$.)
	\end{itemize}
	
	Given the above constructions, the following yields a probabilistic polynomial $\bm{P}$ for $f$.
	\begin{equation}
		\label{eq:final-ubd}
		\bm{P} = \bm{Q}\cdot \left(\sum_{i_1,\ldots,i_r\in [s]} R_{i_1,\ldots,i_r}\cdot y_{i_1,\ldots,i_r}\right) + (1-\bm{Q})\cdot y_0
	\end{equation}
	(The two copies of $\bm{Q}$ are chosen with the same randomness and are \emph{not} independent of each other.)
	To see that this works, fix any input $a = (g_1,\ldots,g_r, T,b)$. If $(g_1,\ldots,g_r)\in H^r,$ the term in the parenthesis evaluates to $T(i_1,\ldots,i_r)$ with probability $1$. Further, $\bm{Q}(g_1,\ldots,g_r)$ evaluates to $1$ with probability $2/3$. Hence, $\bm{P}(a) = T(i_1,\ldots,i_r) = f(a)$ with probability at least $2/3$. On the other hand, if  $(g_1,\ldots,g_r)\not\in H^r,$ then $\bm{Q}(g_1,\ldots,g_r)$ evaluates to $0$ with probability $2/3$. When this event occurs, the first summand evaluates to $0$ and the second summand evaluates to $b$. Hence, $\bm{P}(a) = b= f(a)$ with probability at least $2/3$.
	
	It remains  to construct the polynomials $\bm{Q}$ and $R_{i_1,\ldots,i_r}$. We start with $\bm{Q}.$ Recall that a function $g:\{0,1\}^t \rightarrow  \{0,1\}$ lies in $H$ when it is linear over $\F_2$, or equivalently if $g(\alpha\oplus \beta) \oplus g(\alpha)\oplus g(\beta) = 0$ for every $\alpha,\beta\in \{0,1\}^t.$ Thus, the condition that $g_1,\ldots,g_r\in H$ can be rewritten as
	\[
	\bigwedge_{j=1}^r \bigwedge_{\alpha,\beta\in \{0,1\}^t}(1\oplus g_j(\alpha\oplus \beta) \oplus g_j(\alpha)\oplus g_j(\beta) ).
	\]
	Let $q(z_1,z_2,z_3)$ be a constant-degree polynomial of $3$ Boolean variables that evaluates to $1\oplus z_1\oplus z_2\oplus z_3$. Then, the above can be rewritten as $\bigwedge_{j=1}^r \bigwedge_{\alpha,\beta\in \{0,1\}^t}q(g_j(\alpha),g_j(\beta),g_j(\alpha\oplus \beta)).$ Thus, we can define the probabilistic polynomial to be
	\[
	\bm{Q}(x_{j,\alpha}: j\in [r], \alpha\in \{0,1\}^t) = \bm{Q}_1(q(x_{j,\alpha},x_{j,\beta},x_{j,\alpha\oplus \beta}): j\in [r], \alpha,\beta\in \{0,1\}^t)
	\]
	where $Q_1$ is any probabilistic polynomial for the $\AND_{r2^{2t}} = \AND_{rs^2}$ function. By assumption, $\pdeg(\OR_{rs^2})$ and hence, by DeMorgan's laws, $\pdeg(\AND_{rs^2})$ is at most $\log(rs^2)^{c+o(1)} = (\log r + \log s)^{c+o(1)}.$
	
	We now see how to construct $R_{i_1,\ldots,i_r}$ for any fixed $i_1,\ldots,i_r\in [s]$. Recall the standard fact (see, e.g.~\cite{ODonnellbook}) that for $h_{i_1}\neq h_{i_2} \in H$, the functions $\hat{h}_{i_1},\hat{h}_{i_2}:\{0,1\}^t\rightarrow \{-1,1\}$ defined by
	\begin{align*}
		\hat{h}_{i_b}(\alpha) &= 1-2h_{i_b}(\alpha), &&\text{ for all $\alpha\in\{0,1\}^t$ and $b\in \{1,2\}$},
	\end{align*}
	are orthogonal to one another, i.e., $\sum_{\alpha} \hat{h}_{i_1}(\alpha) \hat{h}_{i_2}(\alpha) = 0$. Based on this observation, we define the polynomial as follows.
	\[
	R_{i_1,\ldots,i_r}(x_{j,\alpha}: j\in [r], \alpha\in \{0,1\}^t) = \frac{1}{s^r} \prod_{j=1}^r \left(\sum_{\alpha\in \{0,1\}^t}\hat{h}_{i_j}(\alpha) (1-2x_{j,\alpha})\right).
	\]
	Let us see that this polynomial has the desired properties. Consider input $(g_1,\ldots,g_r)\in H^r$. Assume $g_{j} = h_{i'_j}$ for each $j\in [r]$. Then, we have
	\begin{align*}
		R_{i_1,\ldots,i_r}(g_1,\ldots,g_r) &= \frac{1}{s^r} \prod_{j=1}^r \left(\sum_{\alpha\in \{0,1\}^t}\hat{h}_{i_j}(\alpha) (1-2h_{i'_j}(\alpha))\right)\\
		&= \frac{1}{s^r} \prod_{j=1}^r \left(\sum_{\alpha\in \{0,1\}^t}\hat{h}_{i_j}(\alpha) \hat{h}_{i'_j}(\alpha)\right)
	\end{align*}
	and the latter quantity can be seen to be $1$ if $i'_j = i_j$ for all $j\in [r]$ and $0$ otherwise. Thus, $R_{i_1,\ldots,i_r}$ behaves as stipulated. Note that $\deg(R_{i_1,\ldots,i_r}) = r.$
	
	This concludes the construction of the probabilistic polynomial for $f$. The degree of the polynomial thus constructed is at most $\deg(\bm{Q}) + \max_{i_1,\ldots,i_r}\deg(R_{i_1,\ldots,i_r})  = O((\log r + \log s)^{c+o(1)} + r) = O((\log s)^{c+o(1)} + r).$
	
	\paragraph{Parameters.} We set $r = (\log s)^c = t^c$. This gives a truly $n$-variate Boolean function on $n = O(s^r) = O(2^{t^{1+c}})$ variables with probabilistic degree $t^{c+o(1)} = (\log n)^{(c/(c+1)) + o(1)}.$
	
%	\begin{remark}
%	\label{rem:ubd}
%	 In the proof above, if we instead use the known result that $\pdeg(\OR_n) = O(\log n)$ (Theorem~\ref{thm:BRST}), then we would get the \emph{unconditional} result that there is a truly $n$-variate Boolean function that has probabilistic degree $O(\sqrt{\log n})$ (without the $(\log n)^{o(1)}$ factors).
%	\end{remark}
	
	\bibliographystyle{abbrv}
	\bibliography{pdeg-nvariate-references}
	
	\appendix
	
	\section{Proof of the Random function lower bound (Lemma~\ref{lem:rand-fn})}
	\label{sec:app}
	
	The proof is via a counting argument. 
	
	We start with a standard observation, which follows from a simple averaging argument. If $F:\{0,1\}^m\rightarrow \{0,1\}$ has $(1/10)$-error probabilistic degree $d$, then for any probability distribution $\mu$ over $\{0,1\}^m$, there is a polynomial $P$ of degree at most $d$ such that 
	\begin{equation}
		\label{eq:yao}
		\prob{a\sim \mu}{P(a)= F(a)}\geq \frac{9}{10}.
	\end{equation}
	Conversely, if there is a probability distribution $\mu$ such that (\ref{eq:yao}) does not hold for any polynomial of degree at most $d$, then $\pdeg(F) >d$. We will take the hard distribution to be the uniform distribution over $X$. 
	
	More precisely, call a function $g:X\rightarrow \{0,1\}$ \emph{bad} if there is a polynomial $P$ of degree at most $d$ that agrees with $g$ on at least $9|X|/10 = 9M/10$ points of $X$. Let $\mc{B}$ be the set of bad functions. The reasoning above tells us that
	\begin{equation}
		\label{eq:rand-fn}
		\prob{\bm{F}}{\pdeg_{1/10}(\bm{F}) \leq d} \leq \prob{\bm{F}}{\bm{F}|_X\in \mc{B}} = \frac{|\mc{B}|}{2^{M}}.
	\end{equation}
	where for the latter inequality we have used the fact that the random variables $(\bm{F}(x): x\in X)$ are independently and uniformly distributed. Hence, it will suffice to bound $|\mc{B}|$ to prove the lemma.
	
	To bound the size of $\mc{B},$ it will suffice to give a short encoding of each element of $\mc{B}.$ Fix any $g\in \mc{B}$ and a polynomial $P$ that agrees with $g$ on a set $X'\subseteq X$ such that $|X'| \geq 9M/10$. Note that $g$ can be specified by
	\begin{enumerate}
		\item The set $X'$.
		\item The set of values of $g$ on $X\setminus X'$ (in some pre-determined order).
		\item A polynomial $Q$ of degree at most $d$ that agrees with $g$ on $X'$ (specified as a list of coefficients of monomials).
	\end{enumerate}
	
	Note that the number of choices for $X'$ is at most $\binom{M}{\leq M/10}$, which is bounded by $2^{H(1/10)M},$ where $H(\cdot)$ denotes the binary entropy function. Further, the number of possibilities for $g$ on $X\setminus X'$ is at most $2^{|X\setminus X'|} \leq 2^{M/10}.$
	
	It remains to bound the number of possibilities for $Q$. A priori, it is not completely clear how to bound the number of $Q$ as the coefficients of $Q$ could be arbitrary real numbers. However, we note that if there is a polynomial $P$ that agrees with $g$ on $X'$, then there is also a $Q$ that satisfies this property, and furthermore, the coefficients of $Q$ are rational numbers of small bit complexity. 
	
	Formally, we will use the following lemma, which is an easy consequence of \cite[Corollary 3.2d]{Schrijver}.
	
	\begin{lemma}
		\label{lem:cramer}
		Consider a system of linear equations $Ax = b$ over the rational numbers, where $A$ is an $p\times q$ \emph{Boolean} matrix, and $b\in \{0,1\}^p.$ Then, if the system has a real solution, it has a rational solution that can be specified (as a list of numerator-denominator pairs in binary) by at most $10q^3$ bits.
	\end{lemma}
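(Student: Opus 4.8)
\noindent\textbf{Proof proposal for Lemma~\ref{lem:cramer}.}
The plan is to reduce the given system to a \emph{square invertible} subsystem and then read off a solution via Cramer's rule, bounding the determinants that appear with Hadamard's inequality. (The statement also follows directly from \cite[Corollary 3.2d]{Schrijver}, but a self-contained argument is short.) First I would pass to a minimal consistent subsystem: let $r = \mathrm{rank}(A)$ and pick $r$ rows of $A$ that are linearly independent. Since $Ax=b$ has a solution we have $\mathrm{rank}([A\,|\,b]) = \mathrm{rank}(A) = r$, so these same $r$ rows of $[A\,|\,b]$ are still linearly independent and therefore span the whole row space of $[A\,|\,b]$. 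Writing $A_1 x = b_1$ for the subsystem consisting of these $r$ equations (so $A_1$ is $r\times q$ Boolean and $b_1\in\{0,1\}^r$), every row of $[A\,|\,b]$ is a linear combination of the rows of $[A_1\,|\,b_1]$, and hence every solution of $A_1 x = b_1$ is automatically a solution of the original system $Ax=b$.

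Next I would shrink the number of variables: choose $r$ columns of $A_1$ that are linearly independent, let $A_2$ be the resulting invertible $r\times r$ Boolean submatrix, and set all variables outside this set of columns to $0$. It then suffices to solve $A_2 x_2 = b_1$ over $\mathbb{Q}$, which has the unique solution $x_2 = A_2^{-1}b_1$. By Cramer's rule the $i$-th coordinate of $x_2$ is $\det(A_2^{(i)})/\det(A_2)$, where $A_2^{(i)}$ is $A_2$ with its $i$-th column replaced by $b_1$. Both $A_2$ and each $A_2^{(i)}$ are $r\times r$ matrices with entries in $\{0,1\}$, so by Hadamard's inequality $|\det(A_2)|,\,|\det(A_2^{(i)})| \le r^{r/2} \le q^{q/2}$, and $\det(A_2)\neq 0$. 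Thus every coordinate of the constructed solution $x\in\mathbb{Q}^q$ is either $0$ or a ratio $N/D$ of integers with $|N|,|D|\le q^{q/2}$.

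Finally I would do the encoding count: an integer of absolute value at most $q^{q/2}$ can be written (magnitude plus sign) in at most $\lceil (q/2)\log_2 q\rceil + 2$ bits, so each of the $q$ coordinates of $x$ needs at most $q\log_2 q + O(1)$ bits (as a numerator--denominator pair) and the whole vector at most $q^2\log_2 q + O(q)$ bits; since $\log_2 q \le q$ for all $q\ge 1$, this is comfortably below $10q^3$, as required. The only points needing genuine care are the rank/consistency bookkeeping in the first step --- making sure that deleting rows and fixing variables to $0$ neither destroys solvability nor adds spurious constraints --- and the determinant bound; beyond these routine checks there is no real obstacle, so I expect the ``hard part'' here to be purely expository, namely stating the reduction cleanly.
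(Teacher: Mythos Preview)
Your argument is correct: the reduction to an invertible $r\times r$ Boolean subsystem, Cramer's rule, Hadamard's bound $|\det|\le r^{r/2}\le q^{q/2}$, and the final bit count all go through as you describe. The paper itself does not actually prove this lemma --- it simply states it as ``an easy consequence of \cite[Corollary 3.2d]{Schrijver}'' --- and the standard proof of that corollary is precisely the Cramer/Hadamard computation you wrote out, so your proposal is a self-contained version of what the paper is citing rather than a genuinely different approach.
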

	
	To use the above lemma, consider the problem of finding a polynomial $Q$ of degree at most $d$ that agrees with $g$ at all points in $X'$. The coefficients of such a polynomial $Q$ solve a linear system of $p:=|X'|$ many linear equations in $q := \binom{m}{\leq d}$ variables. By the existence of the polynomial $P$, this system has a solution. Thus by Lemma~\ref{lem:cramer}, we know that there is a solution of bit-complexity at most $10q^3 \leq m^{4d} < M/10.$  Therefore, we may always choose $Q$ from the set $\mc{Q}$ of polynomials of bit-complexity (as specified above) at most $M/10$. Note that $|\mc{Q}|\leq 2^{M/10}$ by definition.
	
	Overall, this gives a complete specification of any given $g\in \mc{B}.$ More precisely, we have given a $1$-$1$ map $\tau:\mc{B}\rightarrow \mc{X}\times S\times \mc{Q}$, where $\mc{X}$ is the collection of subsets of $X$ of size at least $9M/10,$ $S$ is the set of Boolean tuples of length $M/10$, and $\mc{Q}$ is the set of polynomials of degree at most $d$ of bit-complexity at most $M/10.$ Hence, $|\mc{B}|\leq |\mc{X}|\cdot |S|\cdot |\mc{Q}| \leq 2^{M\cdot (H(1/10)+1/10+1/10)} \leq 2^{9M/10}.$ Plugging this into (\ref{eq:rand-fn}), we get
	\[
	\prob{\bm{F}}{\pdeg_{1/10}(\bm{F})\leq d} \leq \frac{2^{9M/10}}{2^M} < \frac{1}{10}.
	\]
	
	This finishes the proof of the lemma.

\end{document}